\newtheorem{XX}{Twierdzenie}[section]
\newtheorem{Twierdzenie}[XX]{Theorem}
\newtheorem{Definicja}[XX]{Definition}
\newtheorem{Wniosek}[XX]{Corollary}
\title{On geometry of congruences of null strings in 4-dimensional complex and real pseudo-Riemannian spaces.}
\author{$\textrm{Adam Chudecki}^{*}$}
\begin{document}

\maketitle

$*$ Center of Mathematics and Physics, Lodz University of Technology, 
\newline
$\ \ \ \ \ $ Al. Politechniki 11, 90-924 Łódź, Poland, adam.chudecki@p.lodz.pl
\newline
\newline
\newline
\textbf{Abstract}. 
\newline
4-dimensional spaces equipped with 2-dimensional (complex holomorphic or real smooth) completely integrable distributions are considered. The integral manifolds of such distributions are totally null and totally geodesics 2-dimensional surfaces which are called \textsl{the null strings}. Properties of congruences (foliations) of such 2-surfaces are studied. Some relations between properties of congruences of null strings, Petrov-Penrose type of SD Weyl spinor and algebraic types of the traceless Ricci tensor are analyzed. 
\newline
\newline
\textbf{PACS numbers:} 04.20.Cv, 04.20.Jb, 04.20.Gz
\newline
\textbf{Key words:} para-Hermite spaces, para-K\"{a}hler spaces, null strings, traceless Ricci tensor.


\section{Introduction}

The present paper is devoted to some geometrical aspects of complex and real 4-dimensional spaces. We assume that  spaces are equipped with the holomorphic (in complex case) or smooth (in real case) metric. Except the metric tensor we assume the existence of 2-dimensional, completely integrable distributions. Their integral manifolds are totally null and totally geodesic 2-dimensional surfaces, \textsl{the null strings}. The family of such surfaces constitute \textsl{the congruence (foliation) of the null strings}. 2-dimensional distribution is related to the corresponding 2-form, which is  self-dual (SD) or anti-self-dual (ASD). In this paper we analyze the SD congruences of the null strings. In what follows we use abbreviation \textsl{cns} (cns := congruence of the SD null strings).

The idea of $m$-dimensional real spaces equipped with $n$-dimensional ($n < m$) totally null, parallely propagated distributions appeared in the fifties \cite{Walker} and such spaces are known nowadays as \textsl{Walker spaces}. For our purposes the most interesting is the case with $m=4$ and $n=2$, but in general we do not assume that the distribution is parallel. In dimension 4, the only real spaces which admit cns are spaces equipped with the metric of the neutral (split, ultrahyperbolic) signature $(++--)$. Moreover, such structures are admitted by 4-dimensional complex manifolds. 

In complex geometry cns play a great role. These objects appear in the theory of heavenly spaces ($\mathcal{H}$-spaces) and in the twistor theory as a $\alpha$($\beta$)-surfaces. The relation between cns and the algebraic degeneration of SD Weyl spinor in Einstein spaces has been found in \cite{Plebanski_surf}. The further analysis of the spaces which admit cns led to the concept of the hyperheavenly spaces ($\mathcal{HH}$-spaces). Some properties of cns have been studied in \cite{Boyer_Finley_Plebanski,Przanowski} but the significant progress in understanding the geometry of cns has been done in \cite{Rozga, Rozga_Robinson}. In \cite{Rozga} the properties of cns are related to the properties of shear-free null geodesic congruences (abbreviate by \textsl{sngc}). Clearly, cns are complex analogs of sngc. They have been considered as the most important structures in so-called \textsl{Plebański programme}. The main idea of the Plebański programme was to find the general techniques of generating the real solutions of the Einstein field equations from the complex ones. Unfortunately, such techniques are still unknown despite over 40 years of thorough investigations. 

Real 4-dimensional Walker spaces equipped with 2-dimensional integrable and totally null distributions have been recently analyzed intensively (see \cite{Chaichi,Ch_Walkery} and references therein). It appeared, that the most natural formalism in description of the 4-dimensional Walker spaces is the spinorial formalism. Spinorial formalism as a mathematical tool has been developed since the sixties \cite{Pirani, Penrose, Plebanski_Spinors,Pleban_formalism_1}. Spaces with both SD and ASD 2-dimensional integrable distributions have been introduced in \cite{Ch_Walkery} (see also \cite{Law_Matsushita}). Recently some 4-dimensional spaces equipped with the metric of neutral signature and two distinct cns appeared naturally in the works devoted to geometrical model of bodies which roll on each other without slipping and twisting \cite{Nurowski_An,Nurowski_Bor_Lamoneda}. 

Cns play an important role in Lorentzian geometry as well. The main idea of distinguished papers \cite{Trautman, Nurowski_Trautman} was to treat sngc as an intersection of the SD and ASD congruences of the null strings which exist in the complexification of the tangent bundle. In \cite{Trautman} \textsl{Robinson manifold} have been introduced and their relation with cns has been pointed out. Also the SD null string equations (\ref{rownanie_struny_SD}) have been used to obtain a special class of solutions of the Rarita-Schwinger equation \cite{Szereszewski_Tafel}.

Cns play also a great role in the generalizations of the Goldberg - Sachs theorem \cite{Goldberg_Sachs}. 
For complex, 4-dimensional Einstein spaces the analog of Goldberg - Sachs theorem was presented in \cite{Plebanski_surf}. Then this idea has been generalized to the case of nonzero traceless Ricci tensor \cite{Przanowski_generalized_Goldberg} and to the lower \cite{Nurowski_Chabert} and higher dimensions \cite{Taghavi_Chabert_2}. In \cite{Rod_Hill_Nurowski} the authors considered "sharp" versions of the Goldberg - Sachs theorem in dimension 4.

All mentioned above facts prove that cns carry a very important structure which deserves closer analysis. Our previous work \cite{Chudecki_para_Hermite_para_Kahler} was devoted to Einstein para-Hermite and Einstein para-K\"{a}hler spaces. Such spaces are equipped with two distinct cns (see also \cite{Aleeksiewicz} and references therein).
In this paper we analyze the non-Einsteinian case. There are two main goals of the present paper. The first is to analyze the relation between objects which characterize cns (expansion, Sommers vector) and algebraic types of the traceless Ricci tensor (in complex case small portion of such analysis has been presented in \cite{Przanowski}). The second aim is to analyze the spaces equipped with three \cite{Rozga_Robinson} or four distinct cns. 

Our considerations are purely local. The results are valid for the 4-dimensional complex manifolds and can be easily carried over to the case of the real manifold with neutral signature metric. Usually it is enough to replace all the holomorphic functions by real smooth functions and complex coordinates by the real ones. Sometimes there appear subtle differences between complex and real spaces and these differences are pointed out and thoroughly analyzed.

The paper is organized, as follows.

Section \ref{sekcja_formalizm} contains basic information on the spinorial formalism. We use spinorial formalism in Infeld - Van der Waerden - Plebański notation rather then Penrose notation. In section \ref{Spaces_equipped_with_one_congruence_of_the_SD_null_strings} we consider the spaces which admit one cns. Relation between properties of the cns and algebraic types of the traceless Ricci tensor are pointed out. Finally a new Theorem \ref{Twierdzenie_nasze_o_zerowych_wektorach_wlasnych} concerning the relation between cns and null eigenvectors of the traceless Ricci tensor is proved. Particularly, we find all algebraic types of the traceless Ricci tensor generated by the one nonexpanding cns. Subsection devoted to the Weyl spinor is less original and it contains results known earlier. However, these results are spread all over the wide literature, so we present them here for completeness.

In section \ref{Sekcja_dwie_komplementarne_koonggruencje_strun} we analyze the spaces equipped with two distinct cns. It appears, that two distinct cns determine the traceless Ricci tensor completely (\ref{definition_of_spinors_in_tracelessRicci}) and it is the main result of this section. All algebraic types of the traceless Ricci tensor generated by two distinct cns are found. Finally, the relation between null eigenvectors of the traceless Ricci tensor and properties of two distinct cns is proved. 

Section \ref{sekcja_o_dodatkowych_kongruencjach} is devoted to spaces which admit a richer structure then the para-Hermite spaces, i.e. they admit the existence of three or four distinct cns.  We find the general form of the metric which admits two expanding cns and one nonexpanding cns (metric (\ref{metryka_z_trzema_kongruencjami_SDstrun_jedna_nonexpanding})). Finally, we consider the case which - according to our best knowledge - has not been considered earlier: the space equipped with four distinct cns. In this case we arrive at the metric (\ref{metryka_przestrzeni_z_czterema_kongruencjami_strun}) with constraint equations (\ref{rownania_jakie_zostaly_dla_metryki_cztery_kongr}). Concluding remarks end the paper.


\setlength\arraycolsep{2pt}
\setcounter{equation}{0}

\section{Formalism}
\label{sekcja_formalizm}

In this section we present foundations of the spinorial formalism which seem to be crucial in geometry based on cns. For more detailed treatment see \cite{Pleban_formalism_1, Pleban_formalism_2, Plebanski_Spinors}.

Let $\mathcal{M}$ be a 4-dimensional complex analytic differentiable manifold endowed with a holomorphic metric $ds^2$ or a real 4-dimensional smooth differentiable manifold endowed with a real smooth metric $ds^2$. Thus one deals with \textsl{complex relativity} ($\mathbf{CR}$) or with \textsl{real relativity} ($\mathbf{RR}$). More precisely, inside real relativity we distinguish \textsl{hyperbolic relativity} ($\mathbf{HR}_{+}$ if signature of the metric is $(+++-)$ or $\mathbf{HR}_{-}$ if the signature is $(+---)$), \textsl{ultrahyperbolic relativity} $\mathbf{UR}$ (\textsl{neutral, split}, signature $(++--)$) and \textsl{Euclidean relativity} ($\mathbf{ER}_{+}$ or $\mathbf{ER}_{-}$ if signature is $(++++)$ or $(----)$, respectively).

Denote by $\Lambda^{p}(\mathcal{M})$ the space of all p-forms on $\mathcal{M}$ ($p=0,1,2,3,4$). Let $(e^1, e^2, e^3, e^4) \in \Lambda^{1} (\mathcal{M})$ be the members of the null tetrad so the metric can be written in the form
\begin{equation}
ds^{2} = 2e^{1}e^{2} + 2e^{3}e^{4}
\end{equation}
The bases of the $\Lambda^{p}(\mathcal{M})$ are given by ($A=1,2$,$ \dot{B}= \dot{1}, \dot{2}$)
\begin{eqnarray}
\label{formy_spinorowo}
\Lambda^{0}(\mathcal{M}):  && 1
\\ \nonumber
\Lambda^{1}(\mathcal{M}): && g^{A\dot{B}}
\\ \nonumber
\Lambda^{2}(\mathcal{M}): && S^{AB} := \frac{1}{2} \in_{\dot{R}\dot{S}} g^{A\dot{R}} \wedge g^{B\dot{S}} \ , \ \ \ S^{\dot{A}\dot{B}} := \frac{1}{2} \in_{RS} g^{R\dot{A}}\wedge g^{S\dot{B}}
\\ \nonumber
\Lambda^{3}(\mathcal{M}): && \check{g}^{A\dot{B}} := \frac{1}{3} g^{A}_{\ \; \dot{C}} \wedge S^{\dot{C}\dot{B}} \equiv \frac{1}{3} S^{A}_{\ \; C} \wedge g^{C \dot{B}}
\\ \nonumber
\Lambda^{4}(\mathcal{M}): && vol := -\frac{1}{12} S^{AB} \wedge S_{AB} \equiv \frac{1}{12} S^{\dot{A}\dot{B}} \wedge S_{\dot{A}\dot{B}} \equiv \frac{1}{8} \check{g}^{A\dot{B}} \wedge g_{A\dot{B}}
\end{eqnarray}
where
\begin{subequations}
\begin{eqnarray}
\label{1_formy_spinorowe}
&&(g^{A\dot{B}}) := \sqrt{2}
\left[\begin{array}{cc}
e^4 & e^2 \\
e^1 & -e^3
\end{array}\right] 
\\ 
&& (S^{AB}) := 
\left[\begin{array}{cc}
2 e^4 \wedge e^2 & e^1 \wedge e^2 + e^3 \wedge e^4 \\
e^1 \wedge e^2 + e^3 \wedge e^4 & 2 e^3 \wedge e^1
\end{array}\right] 
\\ 
&&(S^{\dot{A} \dot{B}}) := 
\left[\begin{array}{cc}
2 e^4 \wedge e^1 & -e^1 \wedge e^2 + e^3 \wedge e^4 \\
-e^1 \wedge e^2 + e^3 \wedge e^4 & 2 e^3 \wedge e^2
\end{array}\right] 
\\ 
&& (\check{g}^{A\dot{B}}) := \sqrt{2}
\left[\begin{array}{cc}
e^1 \wedge e^2 \wedge e^4 \ \ \ & \ \ \ e^2 \wedge e^3 \wedge e^4 \\
-e^1 \wedge e^3 \wedge e^4 \ \ \ & \ \ \ e^1 \wedge e^2 \wedge e^3
\end{array}\right] 
\\ 
&& vol := e^1 \wedge e^2 \wedge e^3 \wedge e^4
\end{eqnarray}
\end{subequations}
The spinorial indices are manipulated according to the rules
\begin{eqnarray}
\label{reguly_podnoszenia_wskaznikow}
\Psi_{A} = \in_{AB} \Psi^{B} \ \ \ , \ \ \ \Psi_{\dot{A}} = \in_{\dot{A}\dot{B}} \Psi^{\dot{B}}
\ \ \ , \ \ \
\Psi^{A} = \in^{BA} \Psi_{B} \ \ \ , \ \ \ \Psi^{\dot{A}} = \in^{\dot{B}\dot{A}} \Psi_{\dot{B}}
\\ \nonumber
( \in_{AB} ) := \left[ \begin{array}{cc}
                            0 & 1   \\
                           -1 & 0  
                            \end{array} \right] =: ( \in^{AB} )
\ \ \ , \ \ \ 
( \in_{\dot{A}\dot{B}} ) := \left[ \begin{array}{cc}
                            0 & 1   \\
                           -1 & 0  
                            \end{array} \right] =: ( \in^{\dot{A}\dot{B}} ) \ \ \ \ 
\end{eqnarray}
Nevertheless, one has to carefully raise and lower spinorial indices in the objects from tangent space. Indeed, if $\partial^{\dot{A}}$ means $\partial / \partial \Psi_{\dot{A}}$ and $\Psi_{\dot{A}} = \in_{\dot{A}\dot{B}} \Psi^{\dot{B}}$ then consistency with (\ref{reguly_podnoszenia_wskaznikow}) implies
\begin{equation}
\label{podnoszenie_wskaznikow_spinorowych}
\partial^{\dot{A}} = \partial_{\dot{B}} \in^{\dot{A} \dot{B}} \ , \ \ \ \partial_{\dot{A}} = \in_{\dot{B} \dot{A}} \partial^{\dot{B}} \ , \ \ \ \partial^{A} = \partial_{B} \in^{A B} \ , \ \ \ \partial_{A} = \in_{B A} \partial^{B}
\end{equation}
The 1-forms $g^{A \dot{B}}$ are holomorphic in $\mathbf{CR}$. In the case of $\mathbf{RR}$ the 1-forms $g^{A \dot{B}}$ have the following properties under complex conjugation:
\begin{eqnarray}
\mathbf{HR}_{+}: \ \ && \ \ \overline{g^{A \dot{B}}} = g^{B \dot{A}} 
\\ \nonumber
\mathbf{HR}_{-}: \ \ && \ \ \overline{g^{A \dot{B}}} = -g^{B \dot{A}} 
\\ \nonumber
\mathbf{UR}: \ \ && \ \ \overline{g^{A \dot{B}}} = g^{A \dot{B}} 
\\ \nonumber
\mathbf{ER}_{+}: \ \ && \ \ \overline{g^{A \dot{B}}} = -g_{A \dot{B}} 
\\ \nonumber
\mathbf{ER}_{-}: \ \ && \ \ \overline{g^{A \dot{B}}} = g_{A \dot{B}} 
\\ \nonumber
\end{eqnarray}
where the overbar stands for the complex conjugation.

\begin{Definicja}
Hodge star $*$ is the linear map $* : \Lambda^{p}(\mathcal{M}) \rightarrow \Lambda^{4-p}(\mathcal{M})$ defined as follows
\begin{equation}
\label{definicja_gwiazdki_Hodga}
\ast \omega = \frac{1}{{p! \, (4-p)!}} \, \textrm{exp} \Big[ \frac{i \pi}{2}(p(4-p)-2) \Big]
 \in_{a_1 ... a_p b_1 ... b_{4-p}} \omega^{a_1 ... a_p} \,
 e^{b_1} \wedge ...\wedge e^{b_{4-p}}
\end{equation}
where $\omega \in \Lambda (\mathcal{M})  := \oplus_{p=0}^{4} \Lambda^p(\mathcal{M})$ is the $p$-form and it has the local representation
\begin{equation}
\omega = \frac{1}{p!} \omega_{a_1 ... a_p} e^{a_1} \wedge ... \wedge e^{a_p}
\end{equation}
\end{Definicja}
Operation $*$ defined by (\ref{definicja_gwiazdki_Hodga}) is the idempotent operation on $\Lambda (\mathcal{M})$: $\ast \ast \omega = \omega$. Under (\ref{definicja_gwiazdki_Hodga}) the $p$-forms defined by (\ref{formy_spinorowo}) behave as follows
\begin{eqnarray}
&& \ast 1 = vol \ , \ \ \ \ast g^{A \dot{B}} = \check{g}^{A\dot{B}} \ , \ \ \ \ast S^{AB} = S^{AB} \ , \ \ \ \ast S^{\dot{A} \dot{B}}  = -S^{\dot{A} \dot{B}} 
\\ \nonumber
&& \ast \check{g}^{A\dot{B}}  = g^{A \dot{B}} \ , \ \ \ \ast vol =1
\end{eqnarray}

Some identities involving (\ref{formy_spinorowo}) are helpful in calculations
\begin{eqnarray}
\label{identities}
&& g^{A \dot{B}} \wedge g^{C \dot{D}} = \in^{AC} S^{\dot{B} \dot{D}} + \in^{\dot{B} \dot{D}} S^{AC} 
\\ \nonumber
&& g^{A \dot{B}} \wedge S^{\dot{C}\dot{D}} = -\in^{\dot{B}\dot{D}} \check{g}^{A \dot{C}} - \in^{\dot{B}\dot{C}} \check{g}^{A \dot{D}} \ , \ \ \ 
S^{AB} \wedge g^{C \dot{D}}  = -\in^{AC} \check{g}^{B\dot{D}} - \in^{BC} \check{g}^{A \dot{D}}
\\ \nonumber
&&\frac{1}{4} S^{AB} \wedge S_{CD} = - \delta^{A}_{(C} \delta^{B}_{D)} \, vol \ , \ \ \ 
\frac{1}{4} S^{\dot{A}\dot{B}} \wedge S_{\dot{C}\dot{D}} = \delta^{\dot{A}}_{(\dot{C}} \delta^{\dot{B}}_{\dot{D})} \, vol \ , \ \ \  S^{AB} \wedge S_{\dot{C}\dot{D}} =0
\end{eqnarray}

The connection 1-forms $\Gamma_{ab}$ in null tetrad formalism are connected with the spinorial connection 1-forms $\mathbf{\Gamma}_{AB}$ and $\mathbf{\Gamma}_{\dot{A}\dot{B}}$ by the relations
\begin{equation}
\label{zwiazek_koneksji_spinorowej_i_klasycznej}
(\mathbf{\Gamma}_{AB}) = -\frac{1}{2}
                    \left[ \begin{array}{cc}
                            2\, \Gamma_{42} &  \Gamma_{12} + \Gamma_{34}  \\
                            \Gamma_{12} + \Gamma_{34} & 2\, \Gamma_{31}  
                            \end{array} \right]
                           \ , \ \ \ 
                           (\mathbf{\Gamma}_{\dot{A} \dot{B}}) = -\frac{1}{2}
                    \left[ \begin{array}{cc}
                            2\, \Gamma_{41} &  -\Gamma_{12} + \Gamma_{34}   \\
                             -\Gamma_{12} + \Gamma_{34} & 2\, \Gamma_{32}  
                            \end{array} \right] \ \ \ \ \ \ \ \ \ \ 
\end{equation}
The connection 1-forms can be decomposed according to
\begin{equation}
\label{rozklad_form_koneksji}
\mathbf{\Gamma}_{AB} = -\frac{1}{2} \mathbf{\Gamma}_{AB M \dot{N}} \, g^{M \dot{N}}
\ , \ \ \ 
\mathbf{\Gamma}_{\dot{A}\dot{B}} = -\frac{1}{2} \mathbf{\Gamma}_{\dot{A}\dot{B} M \dot{N}} \, g^{M \dot{N}}
\end{equation}
The general formula for the spinorial covariant derivative of the arbitrary spinor field $\Psi^{A \dot{B}}_{C \dot{D}}$ reads
\begin{eqnarray}
\label{covariant_derivative_for_spinors}
\nabla_{M \dot{N}} \Psi^{A \dot{B}}_{C \dot{D}} &=& \partial_{M \dot{N}} \Psi^{A \dot{B}}_{C \dot{D}} 
+ \mathbf{\Gamma}^{A}_{\ SM \dot{N}} \, \Psi^{S \dot{B}}_{C \dot{D}} 
- \mathbf{\Gamma}^{S}_{\ CM \dot{N}} \, \Psi^{A \dot{B}}_{S \dot{D}}
\\ \nonumber
&& \ \ \ \ \ \ \ \ \ \ + \mathbf{\Gamma}^{\dot{B}}_{\ \dot{S}M \dot{N}} \, \Psi^{A \dot{S}}_{C \dot{D}}
- \mathbf{\Gamma}^{\dot{S}}_{\ \dot{D}M \dot{N}} \, \Psi^{A \dot{B}}_{C \dot{S}}
\end{eqnarray}
where
\begin{equation}
\nabla_{A\dot{B}} := g^{a}_{\ A\dot{B}} \nabla_{a} \ , \ \ \ \partial_{A\dot{B}} := g^{a}_{\ A\dot{B}} \partial_{a} 
\end{equation}
and the matrices $g_{aA\dot{B}}$ are defined by the relation $g^{A\dot{B}} = g_{a}^{\ A\dot{B}}e^{a}$. Note that we have
\begin{equation}
\label{tozsamosci_macierzy_Pauliego}
g_{aA\dot{B}}g^{bA\dot{B}} = -2 \delta_{a}^{b} \ , \ \ \ g_{aA\dot{B}}g^{aC\dot{D}} = -2 \delta_{A}^{C} \delta_{\dot{B}}^{\dot{D}}
\end{equation}
The arbitrary vector $V$ has the form
\begin{equation}
\label{rozklad_wektora}
V = V^{a}\partial_{a} = -\frac{1}{2} V^{A\dot{B}} \partial_{A\dot{B}} \ , \ \ \ V^{a} = -\frac{1}{2} \, g^{a}_{\ A\dot{B}} V^{A\dot{B}} \ \Longleftrightarrow \ V^{A\dot{B}} = g_{a}^{\ A\dot{B}} V^{a}
\end{equation}

The first and second Cartan structure equations read
\begin{subequations}
\begin{eqnarray}
\label{pierwsze_rownania_struktury}
D g^{A \dot{B}} &=& d g^{A \dot{B}} + \Gamma^{A}_{\ \; C} \wedge g^{C \dot{B}} + 
                                  \Gamma^{\dot{B}}_{\ \; \dot{C}} \wedge g^{A \dot{C}} = 0
\\ 
\label{drugie_rownania_struktury}
R^{A}_{\ \; B} &=& d \Gamma^{A}_{\ \; B} + \Gamma^{A}_{\ \; C} \wedge \Gamma^{C}_{\ \; B}
\\ \nonumber
R^{\dot{A}}_{\ \; \dot{B}} &=& d \Gamma^{\dot{A}}_{\ \; \dot{B}} + \Gamma^{\dot{A}}_{\ \; \dot{C}} \wedge \Gamma^{\dot{C}}_{\ \; \dot{B}}
\end{eqnarray}
\end{subequations}
$R^{A}_{\ \; B}$ and $R^{\dot{A}}_{\ \; \dot{B}}$ are the components of curvature 2-forms of the connection $\Gamma^{A}_{\ \; B}$ or $\Gamma^{\dot{A}}_{\ \; \dot{B}}$, respectively, 
$D := - \frac{1}{2} \, g^{A \dot{B}} \, \nabla_{A \dot{B}}$ and $d := - \frac{1}{2} \, g^{A \dot{B}} \, \partial_{A \dot{B}}$. The 2-forms $R_{AB}= R_{(AB)}$ and $R_{\dot{A}\dot{B}} = R_{(\dot{A}\dot{B})}$ can be decomposed with respect to the bases $S^{AB}$ and $S^{\dot{A}\dot{B}}$ as follows
\begin{eqnarray}
\label{definicja_curvatury}
R_{AB} &=& - \frac{1}{2} \, C_{ABCD} \, S^{CD} + \frac{R}{24} \, S_{AB} 
           + \frac{1}{2} \, C_{AB \dot{C}\dot{D}} \, S^{\dot{C} \dot{D}}
\\ \nonumber
R_{\dot{A}\dot{B}} &=& - \frac{1}{2} \, C_{\dot{A}\dot{B}\dot{C}\dot{D}} \, S^{\dot{C} \dot{D}}
                  + \frac{R}{24} \, S_{\dot{A} \dot{B}} 
           + \frac{1}{2} \, C_{CD \dot{A}\dot{B}} \, S^{CD}
\end{eqnarray}
The objects $C_{ABCD}=C_{(ABCD)}$, $C_{\dot{A}\dot{B}\dot{C}\dot{D}}=C_{(\dot{A}\dot{B}\dot{C}\dot{D})}$, $C_{AB \dot{C}\dot{D}} = C_{(AB) \dot{C}\dot{D}} =C_{AB (\dot{C}\dot{D})}$ and $R$ have a transparent geometrical meaning. Namely, $C_{ABCD}$ is the spinorial image of the self-dual (SD) part of the Weyl tensor, $C_{\dot{A}\dot{B}\dot{C}\dot{D}}$ is the spinorial image of the anti-self-dual (ASD) part of the Weyl tensor, $C_{AB \dot{C}\dot{D}}$ is the spinorial image of the traceless Ricci tensor and, finally, $R$ is the curvature scalar.

Algebraic classification of totally symmetric 4-index spinors (like $C_{ABCD}$ and $C_{\dot{A}\dot{B}\dot{C}\dot{D}}$) has been presented in \cite{Penrose}. There are 6 different Petrov-Penrose types of such spinors in $\mathbf{CR}$ (namely $\textrm{[I]}$, $\textrm{[II]}$, $\textrm{[D]}$, $\textrm{[III]}$, $\textrm{[N]}$ and $[-]$). However, in $\mathbf{UR}$ there are 10 different types. These types are $[\textrm{I}_{r}]$, $[\textrm{I}_{rc}]$, $[\textrm{I}_{c}]$, $[\textrm{II}_{r}]$, $[\textrm{II}_{rc}]$, $[\textrm{D}_{r}]$, $[\textrm{D}_{c}]$, $[\textrm{III}_{r}]$, $[\textrm{N}_{r}]$ and $[-]$. In \cite{Rod_Hill_Nurowski} the authors used the following symbols for these types: $G_{r}$, $SG$, $G$, $II_{r}$, $II$, $D_{r}$, $D$, $III_{r}$, $N_{r}$ and $0$, respectively. We do not present here the details of this classification. They can be found in \cite{Rod_Hill_Nurowski, Chudecki_Classification}.

Note, that under the complex conjugation we have
\begin{eqnarray}
\nonumber
\mathbf{CR}: \ \ && \ \  \Gamma_{AB} \ \ , \ \ \Gamma_{\dot{A}\dot{B}} \ \ , \ \ 
C_{ABCD} \ \ , \ \ C_{\dot{A}\dot{B}\dot{C}\dot{D}} \ \ , \ \ C_{AB \dot{C}\dot{D}} \ \ \textrm{and} \ \ R \ \  \textrm{are holomorphic}
\\  \nonumber
\mathbf{HR}: \ \ && \ \ \Gamma_{\dot{A}\dot{B}}= \overline{\Gamma_{AB}}   \ \ , \ \
C_{\dot{A}\dot{B}\dot{C}\dot{D}} = \overline{C_{ABCD}} \ \ , \ \ 
         \overline{C_{AB \dot{C}\dot{D}}}= C_{CD \dot{A}\dot{B}} \ \ , \ \ 
    R  = \overline{R} 
\\ \nonumber
\mathbf{UR}: \ \ && \ \ \Gamma_{AB} \ \ , \ \ \Gamma_{\dot{A}\dot{B}} \ \ , \ \ 
C_{ABCD} \ \ , \ \ C_{\dot{A}\dot{B}\dot{C}\dot{D}} \ \ , \ \ C_{AB \dot{C}\dot{D}} \ \ \textrm{and} \ \ R \ \   \textrm{ are real}
\\ \nonumber
\mathbf{ER}: \ \ && \ \  \overline{\Gamma_{AB}} = \Gamma^{AB} \ \ , \ \ 
\overline{\Gamma_{\dot{A}\dot{B}}} = \Gamma^{\dot{A}\dot{B}} \ \ , \ \ 
\overline{C_{ABCD}} = C^{ABCD} \ \ , \ \ 
\overline{C_{\dot{A}\dot{B}\dot{C}\dot{D}}} = C^{\dot{A}\dot{B}\dot{C}\dot{D}} 
\\ \nonumber
&& \ \ \overline{C_{AB \dot{C}\dot{D}}} = C^{AB \dot{C}\dot{D}} \ \ , \ \ 
      R  = \overline{R} 
\end{eqnarray}
The Ricci identities for 1-index spinors read
\begin{eqnarray}
\label{Ricci_identities}
\frac{1}{2} \, \nabla^{E}_{\ \; ( \dot{C}} \nabla_{|E| \dot{D})} \, \Psi^{A} &=&
 \Psi^{E} \, C^{A}_{\ \; E \dot{C} \dot{D}}
\\ \nonumber
\frac{1}{2} \, \nabla_{(C}^{\ \ \; \dot{E}} \nabla_{D) \dot{E}} \, \Psi^{A} &=& \Psi^{E} \, \big( -C^{A}_{\ \; ECD} + \frac{R}{12} \, \in_{E(C}\delta^{A}_{\ \; D)} \big)
\\ \nonumber
\frac{1}{2} \, \nabla^{\ \ \; \dot{E}}_{( C} \nabla_{D) \dot{E}} \, \Psi^{\dot{A}} &=&
 \Psi^{\dot{E}} \, C^{\ \ \ \; \dot{A}}_{CD \ \ \dot{E}}
\\ \nonumber
\frac{1}{2} \, \nabla_{\ \; (\dot{C}}^{E} \nabla_{|E| \dot{D})} \, \Psi^{\dot{A}} &=& \Psi^{\dot{E}} \, \big( -C^{\dot{A}}_{\ \; \dot{E} \dot{C} \dot{D}} + \frac{R}{12} \, \in_{\dot{E} ( \dot{C}}\delta^{\dot{A}}_{\ \; \dot{D})} \big)
\end{eqnarray}
From (\ref{Ricci_identities}) we can easily obtain the Ricci identities for any spinor $\Psi^{A...\dot{B}...}_{C...\dot{D}...}$.

Let $m_{A}$ and $\mu_{A}$ be the pair of linearly independent spinors such that $m_{A} \mu^{A}=1$. SD curvature coefficients $C^{(i)}$, $i=1,...,5$ are defined as follows
\begin{eqnarray}
\label{krzywizna_rozlozona_na_wspolll}
2C_{ABCD} &=:& C^{(1)} \, \mu_{A}\mu_{B}\mu_{C}\mu_{D} + 4 C^{(2)} \, m_{(A}\mu_{B}\mu_{C}\mu_{D)} + 6 C^{(3)} \, m_{(A}m_{B}\mu_{C}\mu_{D)} 
\\ \nonumber
&& + 4 C^{(4)} \, m_{(A}m_{B}m_{C}\mu_{D)} +  C^{(5)} \, m_{A}m_{B}m_{C}m_{D} 
\end{eqnarray}

The relation between the traceless Ricci tensor $C_{a b}$ and its spinorial image $C_{AB\dot{C}\dot{D}}$ reads
\begin{equation}
C_{ab} = g_{a}^{\ A\dot{C}} g_{b}^{\ B\dot{D}} \, C_{AB\dot{C}\dot{D}} \ \Longleftrightarrow \ C_{AB\dot{C}\dot{D}} = \frac{1}{4} C_{ab} g^{a}_{\ A\dot{C}} g^{b}_{\ B\dot{D}}
\end{equation}
We decompose the traceless Ricci tensor $C_{AB \dot{C}\dot{D}}$ according to
\begin{equation}
\label{traceless_Ricci_tensor_two_congruences}
C_{AB\dot{M}\dot{N}} = m_{A}m_{B} A_{\dot{M}\dot{N}} + 2 m_{(A}\mu_{B)} B_{\dot{M}\dot{N}} + \mu_{A}\mu_{B} C_{\dot{M}\dot{N}}
\end{equation}
where $A_{\dot{M}\dot{N}}$, $B_{\dot{M}\dot{N}}$ and $C_{\dot{M}\dot{N}}$ are symmetric, 2-index dotted spinors (see \cite{Plebanski_Spinors} for more details). From those spinors the following scalars can be constructed
\begin{eqnarray}
\label{skroty_wielkosci_dla_Ricci}
&& a:= A_{\dot{A}\dot{B}}A^{\dot{A}\dot{B}} \ , \ \ \ b:= B_{\dot{A}\dot{B}}B^{\dot{A}\dot{B}}  \ , \ \ \ c := C_{\dot{A}\dot{B}}C^{\dot{A}\dot{B}} 
\\ \nonumber
&& r:= A_{\dot{A}\dot{B}}B^{\dot{A}\dot{B}}  \ , \ \ \ n:= A_{\dot{A}\dot{B}}C^{\dot{A}\dot{B}}  \ , \ \ \ s:= B_{\dot{A}\dot{B}}C^{\dot{A}\dot{B}} 
\end{eqnarray}
The equations for the eigenvectors and eigenvalues of the traceless Ricci tensor have the form (compare (\ref{tozsamosci_macierzy_Pauliego}) and (\ref{rozklad_wektora}))
\begin{equation}
\label{eigenvalues_equationss_forCab}
C^{a}_{\ b} V^{b} = \lambda \, V^{a} \ \ \Longleftrightarrow \ \ C_{AB\dot{C}\dot{D}} V^{B\dot{D}} = -\frac{1}{2} \lambda \, V_{A\dot{C}}
\end{equation}
The characteristic polynomial of the matrix $(C^{a}_{\ b})$ of the traceless Ricci tensor reads
\begin{equation}
\label{wielomian_charakterystyczny_traceless_Ricci_tensor}
 \mathcal{W} (x) := \det (C^{a}_{\ b} - x \delta^{a}_{\ b}) = \sum_{i=0}^{4} (-1)^{i} \, \underset{[i]}{\mathbb{C}} \, x^{4-i} \equiv \underset{[0]}{\mathbb{C}}x^4 - \underset{[1]}{\mathbb{C}} x^3+ \underset{[2]}{\mathbb{C}} x^2 - \underset{[3]}{\mathbb{C}} x + \underset{[4]}{\mathbb{C}}
\end{equation}
where coefficients $\underset{[i]}{\mathbb{C}}$ read
\begin{equation}
\underset{[0]}{\mathbb{C}} := 1 \ , \ \ \ \underset{[k]}{\mathbb{C}} := C^{a_{1}}_{\ \; [a_{1}} ... C^{a_{k}}_{\ \; a_{k}]} \ , \ \ \ k=1,2,3,4
\end{equation}
After long but elementary calculations we find the relations between $\underset{[i]}{\mathbb{C}}$ and the scalars given by (\ref{skroty_wielkosci_dla_Ricci})
\begin{eqnarray}
&&\underset{[0]}{\mathbb{C}} = 1 \ , \ \ \ \underset{[1]}{\mathbb{C}} = 0 \ , \ \ \ 
\underset{[2]}{\mathbb{C}} = 4 b -4  n  \ , \ \ \ \underset{[4]}{\mathbb{C}} = -16rs+ 8 nb + 4 a c+ 4 b^{2}
\\ \nonumber
&&\underset{[3]}{\mathbb{C}} = 16 A_{\dot{M}\dot{N}} C^{\dot{N}}_{\ \; \dot{R}}  B^{\dot{R}\dot{M}} \ \ \Longrightarrow \ \ \frac{1}{16^2} \underset{[3]}{\mathbb{C}}^{2} = rsn+\frac{1}{2}abc
-\frac{1}{2} cr^2-\frac{1}{2}bn^2-\frac{1}{2}as^2
\end{eqnarray}
Definitions of undotted and dotted \textsl{Plebański spinors} are \cite{Plebanski_Spinors, Plebanski_klasyfikacja_matter}
\begin{subequations}
\begin{eqnarray}
\label{Plebannnn_spinorrs}
V_{ABCD} &:=& 4 \, C_{(AB}^{\ \ \ \ \; \dot{M}\dot{N}} C_{AC)\dot{M}\dot{N}} = 4a \, m_{A}m_{B}m_{C}m_{D} + 16 r \,m_{(A}m_{B}m_{C}\mu_{D)}
\\
\nonumber
&& + (8n + 16 b) \, m_{(A}m_{B}\mu_{C}\mu_{D)} + 16 s \, m_{(A}\mu_{B}\mu_{C}\mu_{D)} + 4c\, \mu_{A}\mu_{B}\mu_{C}\mu_{D}
\\
V_{\dot{A}\dot{B}\dot{C}\dot{D}} &:=& 4 \, C_{MN(\dot{A}\dot{B}} C^{MN}_{\ \ \ \, \dot{C}\dot{D})}= 8 A_{(\dot{A}\dot{B}}C_{\dot{C}\dot{D})} - 8 B_{(\dot{A}\dot{B}}B_{\dot{C}\dot{D})}
\end{eqnarray}
\end{subequations}
Plebański spinors are 4-index and totally symmetric, so they can be classified analogously like SD and ASD Weyl spinors. 

The algebraic classification of the traceless Ricci tensor plays an important role in analysis presented in this paper. Such classification in 4-dimensional spaces with Lorentzian metric has been done in \cite{Plebanski_klasyfikacja_matter,Plebanski_Spinors} (15 types) and then it was generalized on the complex case in \cite{Przanowski_classification} (17 types). There are 33 different types of the traceless Ricci tensor in 4-dimensional real spaces equipped with the metric of signature $(++--)$. The detailed discussion which leads to this classification can be found in \cite{Chudecki_Classification}. Here we present only brief summary of the results of \cite{Chudecki_Classification}. The information about the algebraic type of the traceless Ricci tensor in $\mathbf{UR}$ is gathered in the following symbol
\begin{displaymath}
^{[\textrm{A}_{j}] \otimes [\textrm{B}_{k}]}[n_{1} E_{1} - n_{2} E_{2} - ...]^{v}_{(q_{1}q_{2}...)}
\end{displaymath}
Inside the square bracket all different eigenvalues $E_{i}$, $i=1,2,...,N_{0}$ of the polynomial $\mathcal{W} (x)$ together with their multiplicities $n_{i}$ are listed. Of course
\begin{eqnarray}
\nonumber
&&n_{1} + n_{2} + ... + n_{N_{0}} =4
\\ \nonumber
&&n_{1} E_{1} + n_{2}E_{2} + ... + n_{N_{0}}E_{N_{0}} =0
\end{eqnarray}
The last equality follows from the fact, that the matrix $(C^{a}_{\ b})$ is traceless. Complex eigenvalues are denoted by $Z$ and the real ones by $R$. Real eigenvalues have additional superscript which denotes the type of the corresponding eigenvector. $R^{s}$ means, that the eigenvector which corresponds to the eigenvalue $R$ is space-like, $R^{t}$ - time-like, $R^{n}$ - null, $R^{ns}$ - null or space-like, $R^{nt}$ - null or time-like and finally $R^{nst}$ means, that the eigenvector can be of the arbitrary type. [We use the same definition of space-like and time-like vectors, like in the Lorentzian case, i.e. $V^{a}V_{a} >0$ means, that $V^{a}$ is space-like, $V^{a}V_{a} <0$ stands for time-like vectors and finally, $V^{a}V_{a} =0$ means, that the vector is null]. Superscript $v$ denotes the number of eigenvectors and the numbers $q_{i}$ in round bracket describes the form of the minimal polynomial. Namely, the minimal polynomial of the matrix $(C^{a}_{\ b})$ has the form
\begin{equation}
\mathcal{W}_{\textrm{min}} (x) := \prod_{i=1}^{N_{0}} (x-E_{i})^{q_{i}}
\end{equation}
Finally, the symbol $[\textrm{A}_{j}] \otimes [\textrm{B}_{k}]$ describes Petrov-Penrose types of the Plebański spinors, $V_{ABCD}$ and $V_{\dot{A}\dot{B}\dot{C}\dot{D}}$, respectively. For example, $[\textrm{III}_{r}] \otimes [\textrm{N}_{r}]$ means, that $V_{ABCD}$ is of the type $[\textrm{III}_{r}]$ while $V_{\dot{A}\dot{B}\dot{C}\dot{D}}$ is of the type $[\textrm{N}_{r}]$.


\renewcommand{\arraystretch}{1.5}
\setlength\arraycolsep{2pt}
\setcounter{equation}{0}

\section{Spaces equipped with one congruence of the SD null strings}
\label{Spaces_equipped_with_one_congruence_of_the_SD_null_strings}

\subsection{Definition and basic properties}

\begin{Definicja}
\label{definicja_foliacji_strun}
Congruence (foliation) of null strings in a complex (real) 4-dimensional manifold $\mathcal{M}$ is a family of totally null and totally geodesics 2-dimensional holomorphic (smooth) surfaces, such that for every point $p\in \mathcal{M}$ there exists only one surface of this family such that $p$ belongs to this surface.
\end{Definicja}
\begin{Twierdzenie}[Plebański, Rózga, \cite{Rozga}]
\label{Definicja_struny_zerowej}
The complex (real) 4-dimensional manifold $\mathcal{M}$ admits a congruence of the SD null strings, if there exists a holomorphic (smooth) 2-form $\Sigma$ and a 1-form $\sigma$ such that
\begin{subequations}
\begin{eqnarray}
\label{definicja_struny_condition_1}
 &&  \Sigma \wedge \Sigma =0
\\ 
\label{definicja_struny_condition_2}
 &&  * \Sigma = \Sigma
\\
\label{definicja_struny_condition_3}
 &&  d \Sigma = \sigma \wedge \Sigma
\end{eqnarray}
\end{subequations}
\hfill $\blacksquare$ 
\end{Twierdzenie}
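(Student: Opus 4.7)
The plan is to prove this by a local decomposition argument. Given $\Sigma$, the algebraic condition (\ref{definicja_struny_condition_1}) lets us write $\Sigma = \alpha \wedge \beta$ locally; condition (\ref{definicja_struny_condition_2}) controls the type of the 2-plane $\mathcal{D}:=\ker\alpha\cap\ker\beta$; and condition (\ref{definicja_struny_condition_3}) delivers the Frobenius integrability of $\mathcal{D}$. The totally geodesic character then follows for free because $\mathcal{D}$ is maximally totally null.

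First I would use that in four dimensions any 2-form satisfying $\Sigma \wedge \Sigma = 0$ is decomposable, so locally $\Sigma = \alpha \wedge \beta$ with $\alpha$, $\beta$ linearly independent, and the distribution $\mathcal{D}$ is well defined (independent of the choice of $\alpha$, $\beta$ up to $GL(2)$ mixing). To identify $\mathcal{D}$ as a totally null SD plane I would pass to spinors: writing $\Sigma = \tfrac{1}{2}\Sigma_{AB} S^{AB}$ one automatically has $\ast\Sigma = \Sigma$ by (\ref{identities}), and the condition $\Sigma \wedge \Sigma = 0$ becomes $\Sigma_{AB}\Sigma^{AB} = 0$ via $\tfrac{1}{4}S^{AB}\wedge S_{CD} = -\delta^{A}_{(C}\delta^{B}_{D)}\, vol$. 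This 2-spinor equation forces the decomposition $\Sigma_{AB} = \mu_A \mu_B$ for some spinor $\mu_A$, and then $\mathcal{D}$ is annihilated by the two 1-forms $\mu_A g^{A\dot{B}}$, $\dot{B}=\dot{1},\dot{2}$. That $\mathcal{D}$ is totally null is immediate from $\mu^A \mu_A = 0$.

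Next I would read off integrability from (\ref{definicja_struny_condition_3}). Wedging $d\Sigma = \sigma\wedge\Sigma$ with $\alpha$ and using $\Sigma\wedge\alpha = 0$ gives $d\alpha \wedge \alpha \wedge \beta = 0$, and symmetrically $d\beta \wedge \alpha \wedge \beta = 0$; these are precisely the Frobenius conditions for $\mathcal{D}$, so $\mathcal{D}$ integrates to a foliation by totally null 2-surfaces. To upgrade \emph{integrable} to \emph{totally geodesic} I would invoke the elementary observation that, in dimension four, a totally null 2-plane distribution satisfies $\mathcal{D}^{\perp} = \mathcal{D}$. Hence in the Koszul formula for $2\langle\nabla_V W,Z\rangle$ with $V,W,Z\in\mathcal{D}$, the three derivative terms $V\langle W,Z\rangle$, $W\langle V,Z\rangle$, $Z\langle V,W\rangle$ vanish because the metric restricts to zero on $\mathcal{D}$, and the three bracket terms vanish because $[\cdot,\cdot]$ preserves $\mathcal{D}$ (Frobenius) and $\mathcal{D}\subset\mathcal{D}^{\perp}$. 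Thus $\nabla_V W \in \mathcal{D}$ for $V,W\in\mathcal{D}$, so the leaves of $\mathcal{D}$ are totally geodesic, i.e., null strings in the sense of Definition \ref{definicja_foliacji_strun}.

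The only delicate point I expect is the SD (rather than ASD) character of the null plane $\mathcal{D}$; this is a convention tied to the orientation fixed in (\ref{1_formy_spinorowe}), and it is handled cleanly by staying spinorial throughout, since any 2-form built from $S^{AB}$ is self-dual by $\ast S^{AB} = S^{AB}$. For the converse statement — which is the version typically needed in applications — one reverses these steps: given a cns with distribution $\mathcal{D}$ and annihilating 1-forms $\alpha,\beta$, set $\Sigma := \alpha\wedge\beta$, note that $\Sigma\wedge\Sigma = 0$ and $\ast\Sigma = \Sigma$ are automatic from the construction, and use the Frobenius relations $d\alpha = \lambda\wedge\alpha+\mu\wedge\beta$, $d\beta = \nu\wedge\alpha+\rho\wedge\beta$ to obtain $d\Sigma = (\lambda+\rho)\wedge\Sigma$ with $\sigma := \lambda+\rho$.
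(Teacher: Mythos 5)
Your argument is correct. Note first that the paper does not actually prove this statement: it is quoted from Pleba\'nski--R\'ozga \cite{Rozga} and closed with $\blacksquare$ immediately, and only a short sketch follows it in the text. That sketch coincides with your first two steps --- conditions (\ref{definicja_struny_condition_1})--(\ref{definicja_struny_condition_2}) force $\Sigma = m_A m_B S^{AB}$ for a nowhere vanishing spinor $m_A$, and (\ref{definicja_struny_condition_3}) is Frobenius integrability of the Pfaff system $m_A g^{A\dot{B}}=0$ --- but the paper then packages the geometry into the null string equation (\ref{rownanie_struny_SD}) and its consequence (\ref{SD_null_strings_2}): contracting $\nabla_{A\dot{M}} m_B = Z_{A\dot{M}} m_B + \in_{AB} M_{\dot{M}}$ with $m^A$ shows that the derivative of $m_B$ along directions in $\mathcal{D}_{m^A}$ stays proportional to $m_B$, which is the spinorial form of \emph{totally geodesic}. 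You obtain the same conclusion from the Koszul formula together with $\mathcal{D}^{\perp}=\mathcal{D}$; this is more elementary, works verbatim for any integrable totally null $2$-plane field in complex or neutral signature, and postpones the introduction of the Sommers vector and the expansion, whereas the paper's route has the advantage of producing exactly the objects ($Z_{A\dot{B}}$, $M_{\dot{B}}$) on which the rest of the analysis is built. Two small points are worth making explicit in your write-up: $\Sigma$ must be assumed nowhere vanishing (otherwise neither $\Sigma_{AB}=\mu_A\mu_B$ nor the distribution is defined), and the equivalence $\Sigma\wedge\Sigma=0 \Leftrightarrow \Sigma_{AB}\Sigma^{AB}=0 \Leftrightarrow \Sigma_{AB}=\mu_A\mu_B$ rests on the factorization $\Sigma_{AB}=\alpha_{(A}\beta_{B)}$ of a symmetric $2$-spinor together with $\Sigma_{AB}\Sigma^{AB}\propto(\alpha_A\beta^A)^2$; both are implicit in what you wrote and harmless.
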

In what follows we often abbreviate the cns defined by the 2-form $\Sigma$ by \textsl{$\Sigma$-congruence}. The 2-form $\Sigma$ which satisfies the conditions (\ref{definicja_struny_condition_1}) and (\ref{definicja_struny_condition_2}) from the Theorem \ref{Definicja_struny_zerowej} has the form $\Sigma=m_{A} m_{B} S^{AB}$ where $m_{A}$ is nowhere vanishing spinor. The condition (\ref{definicja_struny_condition_3}) implies, that 2-form $\Sigma$ is an element of the 2-surface completely integrable in the Frobenius sense. It gives restrictions on the spinor $m_{A}$. Indeed, one finds that spinor $m_{A}$ has to satisfy the following equations
\begin{equation}
\label{rownanie_struny_SD}
m^{A}m^{B} \nabla_{A \dot{M}} m_{B}=0
\end{equation}
The crucial equations (\ref{rownanie_struny_SD}) are called \textsl{the SD null string equations}. From (\ref{rownanie_struny_SD}) we find
\begin{equation}
\label{SD_null_strings_2}
\nabla_{A\dot{M}} m_{B} = Z_{A\dot{M}} m_{B} + \in_{AB} M_{\dot{M}}
\end{equation}
where $Z_{A\dot{M}}$ is the \textsl{Sommers vector} \cite{Rozga} and the spinor $M_{\dot{M}}$ is the \textsl{expansion of the congruence of the SD null strings} (see \cite{Rozga}). With fixed Riemannian structure the expansion describes the most important and invariant property of the cns. If $M_{\dot{M}}=0$ then the 2-dimensional distribution $\mathcal{D}_{m^{A}} := \{ m_{A} a_{\dot{B}}, m_{A} b_{\dot{B}} \}$, $a_{\dot{B}} b^{\dot{B}} \ne 0$ is parallely propagated. It means, that $\nabla_{X}V \in \mathcal{D}_{m^{A}}$ for every vector field $V \in \mathcal{D}_{m^{A}}$ and for arbitrary vector field $X$. Such cns are called \textsl{nonexpanding} or \textsl{plane}. If  $M_{\dot{M}} \ne 0$ then we deal with \textsl{expanding} (or \textsl{deviating}) cns. Under re-scaling $m^{A} = m m'^{A}$ the expansion and Sommers vector transform as follows
\begin{equation}
M'_{\dot{A}} = \frac{1}{m} \, M_{\dot{A}} \ , \ \ \ Z'_{A\dot{B}} = Z_{A\dot{B}} - \nabla_{A\dot{B}} \ln m
\end{equation}

\subsection{SD Weyl spinor}

Consider the integrability conditions of the SD null strings equations. Acting on (\ref{SD_null_strings_2}) with $\nabla_{M \dot{N}}$ and using Ricci identities (\ref{Ricci_identities}), one arrives at the equations
\begin{subequations}
\begin{eqnarray}
\label{integrability_cond_1}
-2 m^{S} \, C_{CSMA} + \frac{R}{6} m_{(M} \in_{A)C} &=& m_{C} \, \nabla_{(M}^{\ \ \ \dot{N}}Z_{A) \dot{N}} 
- \in_{C(M} ( Z_{A)\dot{N}}M^{\dot{N}} - \nabla_{A)\dot{N}}M^{\dot{N}} ) \ \ \ \ \ \ \ \ \ \ 
\\ 
\label{integrability_cond_2}
2m^{S} \, C_{CS \dot{N}\dot{B}} &=& m_{C} \nabla^{A}_{\ \; (\dot{N}} Z_{|A|\dot{B})} + Z_{C(\dot{B}} M_{\dot{N})} - \nabla_{C(\dot{B}} M_{\dot{N})} \ \ \ \ \ 
\end{eqnarray}
\end{subequations}
Immediately we find
\begin{Twierdzenie}
\label{twierdzenie_o_spinorze_generujacym_wstege_i_spinorze_Penrosa}
If a spinor $m_{A}$ generates a congruence of SD null strings, then it is a Penrose spinor.
\end{Twierdzenie}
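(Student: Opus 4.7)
The plan is to contract the integrability condition (\ref{integrability_cond_1}) with $m^{A}m^{C}m^{M}$ and show that every term on the right-hand side collapses, leaving $C_{ABCD}m^{A}m^{B}m^{C}m^{D}=0$, which is precisely the Penrose (principal null direction) condition for the SD Weyl spinor.

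First I would handle the left-hand side. The term $-2m^{S}C_{CSMA}$ contracted with $m^{A}m^{C}m^{M}$ collapses, thanks to the total symmetry of $C_{ABCD}=C_{(ABCD)}$, to $-2C_{ABCD}m^{A}m^{B}m^{C}m^{D}$. The Ricci-scalar piece $\tfrac{R}{6}m_{(M}\!\in_{A)C}$ dies on contraction, because when combined with the symmetric $m^{M}m^{A}$ one is left with a product of $\in_{AC}m^{A}m^{C}=0$ (antisymmetry of $\in$) and of $m_{A}m^{A}=0$.

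Next I would dispose of the right-hand side. The term $m_{C}\nabla_{(M}^{\ \ \dot{N}}Z_{A)\dot{N}}$ contracted with $m^{C}$ gives the factor $m_{C}m^{C}=0$ and vanishes immediately. The two remaining terms carry an $\in_{C(M}$; using $m^{C}\!\in_{C(M}X_{A)}=-m_{(M}X_{A)}$ and then symmetrizing against $m^{M}m^{A}$ produces again only factors of $m^{A}m_{A}=0$, so both the $Z_{A\dot{N}}M^{\dot{N}}$ and the $\nabla_{A\dot{N}}M^{\dot{N}}$ contributions vanish. Thus the contracted RHS is identically zero and we obtain $C_{ABCD}m^{A}m^{B}m^{C}m^{D}=0$.

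There is essentially no obstacle here beyond bookkeeping: the whole proof rests on the algebraic facts $m_{A}m^{A}=0$, $\in_{AB}$ is antisymmetric, and $C_{ABCD}$ is totally symmetric. The only small subtlety is checking that the symmetrizations in $(MA)$ are compatible with the already-symmetric $m^{M}m^{A}$, which is automatic. Hence the hypothesis (\ref{rownanie_struny_SD})--(\ref{SD_null_strings_2}) forces $m^{A}$ to be a multiple root of the quartic $C_{ABCD}m^{A}m^{B}m^{C}m^{D}$, i.e.\ a Penrose spinor, with no use of (\ref{integrability_cond_2}) required.
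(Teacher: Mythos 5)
Your proof is correct and follows exactly the paper's route: contract the integrability condition (\ref{integrability_cond_1}) with $m^{A}m^{C}m^{M}$ and observe that every term except the Weyl one is killed by $m_{A}m^{A}=0$, leaving $C_{ABCD}m^{A}m^{B}m^{C}m^{D}=0$. One terminological slip in your last sentence: this conclusion makes $m^{A}$ a \emph{root} of the quartic (a Penrose spinor), not a \emph{multiple} root --- the multiple-root condition $m^{S}m^{C}m^{M}C_{CSMA}=0$ is the stronger statement reserved for the nonexpanding case in Theorem \ref{twierdzenie_o_spinorze_generujacym_wstege_nieekspandujaca_i_spinorze_Penrosa}.
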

\begin{proof} Contracting the integrability conditions of the SD null strings equations (\ref{integrability_cond_1}) with $m^{C}m^{M}m^{A}$ we find $m^{S}m^{C}m^{M}m^{A} \, C_{CSMA}=0$.
\end{proof}
\begin{Twierdzenie}
\label{twierdzenie_o_spinorze_generujacym_wstege_nieekspandujaca_i_spinorze_Penrosa}
If a spinor $m_{A}$ generates a nonexpanding congruence of SD null strings then it is a multiple Penrose spinor.
\end{Twierdzenie}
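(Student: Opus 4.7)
The plan is to sharpen the single-contraction argument used in the proof of Theorem~\ref{twierdzenie_o_spinorze_generujacym_wstege_i_spinorze_Penrosa}: instead of contracting the integrability condition (\ref{integrability_cond_1}) with three factors $m^{C}m^{M}m^{A}$, I would contract it with only one factor $m^{C}$, keeping $M$ and $A$ free. The resulting expression for $m^{S}m^{C}C_{CSMA}$, read off against the Pleba\'nski decomposition (\ref{krzywizna_rozlozona_na_wspolll}), then controls $C^{(1)}$, $C^{(2)}$ and $C^{(3)}$ simultaneously -- one step deeper than the Penrose condition $C^{(1)}=0$ of the previous theorem.

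The nonexpanding hypothesis $M_{\dot{N}}=0$ (hence also $\nabla_{A\dot{N}}M^{\dot{N}}=0$) annihilates the entire $\in_{C(M}(\,\cdots\,)$ term on the right-hand side of (\ref{integrability_cond_1}), reducing it to
\begin{equation*}
-2m^{S}C_{CSMA}+\tfrac{R}{6}\,m_{(M}\in_{A)C}=m_{C}\,\nabla_{(M}^{\ \ \dot{N}}Z_{A)\dot{N}}.
\end{equation*}
Contracting with $m^{C}$, the right-hand side vanishes because $m^{C}m_{C}=0$; using $\in_{AC}m^{C}=m_{A}$ one finds $m^{C}m_{(M}\in_{A)C}=m_{M}m_{A}$, and I obtain
\begin{equation*}
m^{S}m^{C}C_{CSMA}=\tfrac{R}{12}\,m_{M}m_{A}.
\end{equation*}
On the other hand, substituting (\ref{krzywizna_rozlozona_na_wspolll}) into the left-hand side and performing the two $m$-contractions (using $m^{A}\mu_{A}=-1$ and $m^{A}m_{A}=0$, which kills the $C^{(4)}$ and $C^{(5)}$ contributions) gives
\begin{equation*}
2m^{S}m^{C}C_{CSMA}=C^{(1)}\mu_{M}\mu_{A}+2C^{(2)}m_{(M}\mu_{A)}+C^{(3)}m_{M}m_{A}.
\end{equation*}
Comparing these two formulas in the basis $\{\mu_{M}\mu_{A},\,m_{(M}\mu_{A)},\,m_{M}m_{A}\}$ of symmetric 2-index spinors forces simultaneously $C^{(1)}=0$, $C^{(2)}=0$, and $C^{(3)}=R/6$. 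The vanishing of $C^{(1)}$ recovers Theorem~\ref{twierdzenie_o_spinorze_generujacym_wstege_i_spinorze_Penrosa}; the vanishing of $C^{(2)}$ is precisely the statement that $m_{A}$ is a \emph{multiple} Penrose spinor, which is what the theorem asserts.

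The only genuine obstacle is bookkeeping: one must carefully expand the totally symmetric products $m_{(A}\mu_{B}\mu_{C}\mu_{D)}$ and $m_{(A}m_{B}\mu_{C}\mu_{D)}$ under the $m^{A}m^{B}$ contraction, and track the sign conventions $\in_{AC}m^{C}=m_{A}$ versus $m^{C}\in_{CA}=-m_{A}$. The decisive structural observation is transparent: the nonexpanding hypothesis is exactly what kills the $\in_{C(M}(\,\cdots\,)$ term in (\ref{integrability_cond_1}) and thereby decouples $C^{(2)}$ from the expansion data, promoting $m_{A}$ from a simple to a multiple Penrose root.
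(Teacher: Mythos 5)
Your proposal is correct and follows essentially the same route as the paper: both arguments feed the nonexpanding hypothesis $M_{\dot{N}}=0$ into the integrability condition (\ref{integrability_cond_1}) and contract with powers of $m$, the paper contracting with $m^{C}m^{M}$ to obtain $m^{S}m^{C}m^{M}C_{CSMA}=0$ directly, while you contract with $m^{C}$ alone and read off $C^{(1)}=C^{(2)}=0$ from the decomposition (\ref{krzywizna_rozlozona_na_wspolll}), which is an equivalent statement of the multiple-Penrose condition. Your variant costs a little extra bookkeeping (and the equivalence $C^{(1)}=C^{(2)}=0\Leftrightarrow C_{ABCD}m^{B}m^{C}m^{D}=0$ deserves one explicit line) but yields the bonus $C^{(3)}=R/6$, which the paper only recovers later from (\ref{definicja_Psi}); all signs and contractions in your computation check out against the conventions (\ref{reguly_podnoszenia_wskaznikow}).
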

\begin{proof} Contracting the integrability conditions of the nonexpanding ($M^{\dot{B}}=0$) SD null strings equations (\ref{integrability_cond_1}) with $m^{C}m^{M}$ we find $m^{S}m^{C}m^{M} \, C_{CSMA}=0$. 
\end{proof}

Define spinor $\mu^{A}$ by the relation $\mu^{A} m_{A}=1$. Then spinors $m_{A}$ and $\mu_{A}$ constitute the basis of the 1-index undotted spinors. The SD curvature coefficients $C^{(i)}$, $i=1,...,5$ are defined as usual by (\ref{krzywizna_rozlozona_na_wspolll}). Eqs. (\ref{integrability_cond_1}) constitute the set of 6 equations which relate SD Weyl spinor and the curvature scalar with the Sommers vector and the expansion of the cns. We easily conclude that 
\begin{equation}
\label{krzywizna_jedna_sstruna}
C^{(1)}=0 \ , \ \ \  \nabla_{(M}^{\ \ \ \dot{N}} Z_{A)\dot{N}} = C^{(4)} \, m_{A} m_{M} + 3C^{(3)} \, m_{(A}\mu_{M)} + 3C^{(2)} \, \mu_{A}\mu_{M}
\end{equation}
Note that the coefficient $C^{(5)}$ remains undetermined, so the existence of only one cns does not determine SD conformal curvature. Using (\ref{krzywizna_jedna_sstruna}), the integrability conditions (\ref{integrability_cond_1}) reduce to the equations
\begin{equation}
\label{definicja_Psi}
Z_{A\dot{B}}M^{\dot{B}} - \nabla_{A\dot{B}}M^{\dot{B}} = \frac{1}{6} (R-6C^{(3)} ) m_{A} - 2C^{(2)} \mu_{A} 
\end{equation}
From (\ref{definicja_Psi}) (with help of (\ref{krzywizna_jedna_sstruna})) we find the curvature scalar 
\begin{equation}
\label{skalar_krzywizny_R_jawnie}
\frac{3}{2} R = \mu_{A} (3 \nabla^{A\dot{N}} - Z^{A \dot{N}}) (2m^{M}Z_{M\dot{N}} - 3M_{\dot{N}}) + 2Z_{N\dot{N}}Z^{N\dot{N}}+ 3\nabla_{N\dot{N}}Z^{N\dot{N}}
\end{equation}
Finally (\ref{definicja_Psi}) leaves us with one constraint equation
\begin{equation}
\label{wiaz_na_obecnosc_jednej_struny}
\nabla^{M \dot{N}} \left( m_{M} (2m^{A}Z_{A\dot{N}} - 3M_{\dot{N}} ) \right) = 0
\end{equation}

Gathering: from 6 equations (\ref{integrability_cond_1}) we get the formulas (\ref{krzywizna_jedna_sstruna}) for curvature coefficients $C^{(1)}$, $C^{(2)}$, $C^{(3)}$ and $C^{(4)}$, the curvature scalar $R$ given by (\ref{skalar_krzywizny_R_jawnie}) and one constraint equation (\ref{wiaz_na_obecnosc_jednej_struny}). 

Petrov-Penrose types of $C_{ABCD}$ of the spaces equipped with one cns are gathered in the Table \ref{Tabela_Petrov_Penrose_types_via_properties_one_congruence}. In the case of the Einstein spaces the types $[\textrm{I}]$ (in $\mathbf{CR}$), $[\textrm{I}_{r}]$ and $[\textrm{I}_{rc}]$ (in $\mathbf{UR}$) are not admitted (compare \textsl{generalized Goldberg - Sachs Theorem} which has been formulated for the first time in \cite{Plebanski_surf}).

\begin{table}[ht]
\begin{center}
\begin{tabular}{|c|c|c|}   \hline
\multicolumn{3}{|c|}{Types in $\mathbf{CR}$}  \\ \hline
Curvature scalar & $M_{\dot{A}} \ne 0$ & $M_{\dot{A}}=0$   \\  \hline
$R\ne0$ & $[\textrm{I}]$, $[\textrm{II}]$, $[\textrm{D}]$, & $[\textrm{II}]$, $[\textrm{D}]$  \\ \cline{1-1} \cline{3-3}
$R=0$ & $[\textrm{III}]$, $[\textrm{N}]$, $[-]$  & $[\textrm{III}]$, $[\textrm{N}]$, $[-]$  \\ \hline
\multicolumn{3}{|c|}{Types in $\mathbf{UR}$}  \\ \hline
Curvature scalar & $M_{\dot{A}} \ne 0$ & $M_{\dot{A}}=0$   \\  \hline
$R\ne0$ & $[\textrm{I}_{r}]$, $[\textrm{I}_{rc}]$, $[\textrm{II}_{r}]$, $[\textrm{II}_{rc}]$,  & $[\textrm{II}_{r}]$, $[\textrm{II}_{rc}]$, $[\textrm{D}_{r}]$  \\ \cline{1-1} \cline{3-3}
$R=0$ & $[\textrm{D}_{r}]$, $[\textrm{III}_{r}]$, $[\textrm{N}_{r}]$, $[-]$ & $[\textrm{III}_{r}]$, $[\textrm{N}_{r}]$, $[-]$  \\ \hline
\end{tabular}
\caption{Petrov-Penrose types of the SD Weyl spinor of the spaces equipped with one congruence of SD null strings.}
\label{Tabela_Petrov_Penrose_types_via_properties_one_congruence}
\end{center}
\end{table}

\subsection{Traceless Ricci tensor}
\label{subsekcja_Analysis_of_algebraic_structure_of_traceless_Ricci_tensor_one_congr}

Now we find the possible types of the traceless Ricci tensor which are admitted by the space equipped with one cns in $\mathbf{CR}$ and $\mathbf{UR}$. The criteria described in \cite{Przanowski_classification} allow to establish the algebraic type of the traceless Ricci tensor in $\mathbf{CR}$. In \cite{Chudecki_Classification} we presented the similar criteria for the Plebański-Przanowski types of the traceless Ricci tensor in $\mathbf{UR}$.

If a space is equipped with one cns then the traceless Ricci tensor is determined with precision up to the factor $m^{A}C_{AB \dot{M}\dot{N}}$ and it is given by (\ref{integrability_cond_2}). Using the decomposition (\ref{traceless_Ricci_tensor_two_congruences}) we find $B_{\dot{A}\dot{B}}$ and $C_{\dot{A}\dot{B}}$ as
\begin{subequations}
\begin{eqnarray}
\label{definition_B_AB_1}
2B_{\dot{A}\dot{B}} &=& \mu_{N} ( Z^{N}_{\ \; (\dot{A}} M_{\dot{B})} -  \nabla^{N}_{\ (\dot{A}} M_{\dot{B})} ) + \nabla_{N (\dot{A}} Z^{N}_{\ \, \dot{B})}
\\
\label{definition_C_AB_1}
2C_{\dot{A}\dot{B}} &=& m_{N} ( \nabla^{N}_{\ ( \dot{A}} M_{\dot{B})} - Z^{N}_{\ \; ( \dot{A}} M_{\dot{B})} )
\end{eqnarray}
\end{subequations}
Spinor $A_{\dot{A}\dot{B}}$ is not determined and, consequently, the spaces equipped with one cns only admit all Plebański-Przanowski types of the traceless Ricci tensor. Stronger restrictions on the traceless Ricci tensor appear if spaces admit the nonexpanding cns. Indeed, one finds that $M_{\dot{A}}=0 \Longrightarrow C_{\dot{A}\dot{B}}=0 \Longrightarrow n=c=s=0$ what implies that the only nonzero scalars $\underset{[i]}{\mathbb{C}}$ are
\begin{equation}
\label{Riccitensor_uproszczoneskalary_1}
\underset{[2]}{\mathbb{C}} = 4 b \ , \ \ \ \underset{[4]}{\mathbb{C}} = 4 b^2
\end{equation}
The characteristic polynomial (\ref{wielomian_charakterystyczny_traceless_Ricci_tensor}) takes the form
\begin{equation}
\label{Wielomian_charakterystyczny_uproszczony_jednastruna}
\mathcal{W} (x) = x^{4}+4b \, x^{2} + 4b^{2} = (x^{2} + 2b)^{2}
\end{equation}
The Plebański spinors read
\begin{subequations}
\begin{eqnarray}
V_{ABCD} &=&  4 m_{(A}m_{B} \big( a \, m_{C}m_{D)} + 4 r \,m_{C}\mu_{D)}+  4 b \, \mu_{C}\mu_{D)} \big)
\\
V_{\dot{A}\dot{B}\dot{C}\dot{D}} &=& - 8 B_{(\dot{A}\dot{B}}B_{\dot{C}\dot{D})}
\end{eqnarray}
\end{subequations}
Using criteria given in \cite{Chudecki_Classification} and the form of the Plebański spinors we find that $\mathbf{UR}$ spaces equipped with one nonexpanding cns admit only 12 types of the traceless Ricci tensor (see Table \ref{Tabela_Ricci_via_one_nonexpanding_congruence_1_cong}). In the case of the $\mathbf{CR}$ there are only 9 different types of the traceless Ricci tensor.

There is an interesting connection between the existence of nonexpanding cns and the eigenvalues of the characteristic polynomial of the traceless Ricci tensor. Indeed, we have
\begin{Twierdzenie}[Przanowski, \cite{Przanowski}]
\label{twierdzenie_o_wartosciach_wlasnych}
In $\mathbf{CR}$ the existence of nonexpanding congruence of SD null strings implies that the characteristic polynomial of the traceless Ricci tensor has two double or one quadruple eigenvalue, namely $\lambda = \pm \sqrt{-2b}$.
\hfill $\blacksquare$ 
\end{Twierdzenie}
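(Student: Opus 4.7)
The plan is to read the result directly off the characteristic polynomial (\ref{Wielomian_charakterystyczny_uproszczony_jednastruna}), which has already been derived in the preceding paragraph. So almost all the work is contained in the chain of implications $M_{\dot A}=0 \Rightarrow C_{\dot A \dot B}=0 \Rightarrow n=c=s=0 \Rightarrow \mathcal{W}(x)=(x^2+2b)^2$, and what remains is simply to solve this quadratic-in-$x^2$ equation and discuss the multiplicities.

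First I would note that in $\mathbf{CR}$ the existence of the nonexpanding cns, together with (\ref{definition_C_AB_1}), forces $C_{\dot A \dot B}=0$. Consequently the three scalars built from $C_{\dot A \dot B}$ in (\ref{skroty_wielkosci_dla_Ricci}), namely $c$, $n$ and $s$, all vanish, and therefore the coefficients of the characteristic polynomial (\ref{wielomian_charakterystyczny_traceless_Ricci_tensor}) reduce to the two non-trivial ones displayed in (\ref{Riccitensor_uproszczoneskalary_1}), giving $\mathcal{W}(x)=(x^2+2b)^2$.

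Next I would factor the quartic. Since $\mathcal{W}(x)$ is the square of the quadratic $x^2+2b$, every root has even multiplicity. If $b \neq 0$, the two roots of $x^2+2b=0$ are $\lambda_{\pm}=\pm\sqrt{-2b}$, distinct, and each lifts to a double root of $\mathcal{W}$. If $b=0$, then $\mathcal{W}(x)=x^4$ has the single quadruple root $\lambda=0$, which is still covered by the formula $\lambda=\pm\sqrt{-2b}$ with both branches collapsing to zero. This exhausts the possibilities and matches the statement of the theorem.

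There is essentially no obstacle: the theorem is a direct algebraic consequence of the explicit factorisation (\ref{Wielomian_charakterystyczny_uproszczony_jednastruna}), which in turn rests on the vanishing of $C_{\dot A \dot B}$ guaranteed by (\ref{definition_C_AB_1}) in the nonexpanding case. The only mild subtlety worth mentioning is that the argument is specific to $\mathbf{CR}$ because taking square roots and distinguishing real/complex eigenvalues does not refine the conclusion here; in $\mathbf{UR}$ one would additionally have to track the sign of $b$ to separate real from complex roots, but that refinement is not required by the statement.
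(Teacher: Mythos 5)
Your proposal is correct and follows exactly the route the paper itself takes: the paper derives $M_{\dot A}=0\Rightarrow C_{\dot A\dot B}=0\Rightarrow n=c=s=0$ and the factorisation $\mathcal{W}(x)=(x^2+2b)^2$ in the paragraph immediately preceding the theorem (equations (\ref{Riccitensor_uproszczoneskalary_1})--(\ref{Wielomian_charakterystyczny_uproszczony_jednastruna})), and the theorem is then read off from that factorisation just as you do, including the degenerate case $b=0$ giving a quadruple zero eigenvalue.
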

\begin{Wniosek}[Przanowski, \cite{Przanowski}]
\label{Wniosek_o_istnieniu_zerowego_wektora_wlasnego}
In $\mathbf{CR}$ the existence of nonexpanding congruence of SD null strings implies that the traceless Ricci tensor has at least one null eigenvector which is tangent to the null strings. Moreover:
\begin{itemize}
\item if there is exactly one null eigenvector then it is tangent to the null string
\item if there are two linearly independent null eigenvectors then at least one of them is tangent to the null string
\item if there are three or four linearly independent null eigenvectors then exactly two of them are tangent to the null string     \hfill $\blacksquare$ 
\end{itemize}
\end{Wniosek}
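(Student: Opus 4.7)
The plan is to reduce the question to an algebraic analysis of the spinors $A_{\dot M\dot N}$ and $B_{\dot M\dot N}$ appearing in the decomposition (\ref{traceless_Ricci_tensor_two_congruences}). Since $M^{\dot B}=0$, equation (\ref{definition_C_AB_1}) forces $C_{\dot M\dot N}=0$, so $C_{AB\dot M\dot N}=m_A m_B A_{\dot M\dot N}+2m_{(A}\mu_{B)}B_{\dot M\dot N}$. Every null vector may be written $V^{A\dot B}=k^A\ell^{\dot B}$. I would substitute this into the eigenvalue equation (\ref{eigenvalues_equationss_forCab}), expand $k^A=\alpha m^A+\beta\mu^A$ in the basis $(m^A,\mu^A)$, and read off the coefficients of $m_A$ and $\mu_A$. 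The resulting pair of equations reduces to the following statement: $V^{A\dot B}=k^A\ell^{\dot B}$ is an eigenvector of $C^{a}_{\ b}$ if and only if $\ell^{\dot N}$ is an eigenspinor of $B_{\dot M\dot N}$ with eigenvalue $\lambda/2$, together with the extra requirement $A_{\dot M\dot N}\ell^{\dot N}=0$ whenever $k^A$ is not proportional to $m^A$.

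The main claim follows at once: in $\mathbf{CR}$ the $2\!\times\!2$ symmetric spinor $B_{\dot M\dot N}$ always admits an eigenspinor $\ell^{\dot N}$ (by the fundamental theorem of algebra applied to its characteristic polynomial), so $V^{A\dot B}=m^A\ell^{\dot B}$ is always a null eigenvector tangent to the null string. To obtain the three bulleted refinements I plan a short classification by the algebraic type of $B$: if $b\ne 0$ or $B=0$ then $B$ is diagonalizable with two linearly independent eigenspinors, giving two tangent null eigenvectors $m^A\ell_{\pm}^{\dot B}$; if instead $B\ne 0$ with $b=0$ then $B_{\dot M\dot N}=\beta_{\dot M}\beta_{\dot N}$ has a unique eigenspinor, giving exactly one tangent null eigenvector. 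Non-tangent null eigenvectors require an eigenspinor of $B$ simultaneously annihilating $A_{\dot M\dot N}$, which can occur only if $A=0$ or $A_{\dot M\dot N}=a_{\dot M}a_{\dot N}$ with $a^{\dot M}$ aligned with an eigenspinor of $B$.

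Running through these subcases yields the three bullets: configurations with a single tangent null eigenvector have total count $1$ or $2$ (first and second bullets), while configurations with two tangent null eigenvectors have total count $2$, $3$ or $4$ with the tangent count fixed at two (second and third bullets). The main technical subtlety will be the bookkeeping of \emph{linear independence} in the degenerate configurations where a two-dimensional eigenspace of $C^{a}_{\ b}$ becomes totally null -- for example $A=0$ with $b\ne 0$, or $A=B=0$ -- since in these cases such a totally null eigenspace contributes two linearly independent null eigenvectors while meeting the tangent plane of the null string in a single line, which is precisely the mechanism producing the counts $3$ or $4$ in the last bullet.
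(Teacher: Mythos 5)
Your proposal is correct, and it is essentially the approach the paper itself relies on: the paper does not actually write out a proof of Corollary \ref{Wniosek_o_istnieniu_zerowego_wektora_wlasnego} (it is quoted from \cite{Przanowski} with only a citation), but the reduction you perform -- factorize the null vector as $k^{A}\ell^{\dot{B}}$, insert it into (\ref{eigenvalues_equationss_forCab}) with $C_{\dot{A}\dot{B}}=0$, and read off that $\ell^{\dot{N}}$ must be an eigenspinor of $B_{\dot{M}\dot{N}}$ with the extra condition $A_{\dot{M}\dot{N}}\ell^{\dot{N}}=0$ for non-tangent directions -- is exactly the computation carried out in the paper's proof of Theorem \ref{Twierdzenie_nasze_o_zerowych_wektorach_wlasnych} and underlying the enumeration in Table \ref{Tabela_Ricci_via_one_nonexpanding_congruence_1_cong}. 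Your case analysis on the algebraic type of $B_{\dot{M}\dot{N}}$ (distinguished by $b\ne 0$, $b=0$ with $B\ne 0$, and $B=0$) and on the degeneracy of $A_{\dot{M}\dot{N}}$ correctly reproduces the counts $1$, $2$, $3$, $4$ with the tangent count as claimed, including the key observation that a totally null eigenplane meets the string's tangent plane in a single direction.
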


\textbf{Remark}. In $\mathbf{UR}$ Theorem \ref{twierdzenie_o_wartosciach_wlasnych} still holds true but Corollary \ref{Wniosek_o_istnieniu_zerowego_wektora_wlasnego} does not hold true anymore. Indeed, there are two algebraic types (namely $ ^{[\textrm{D}_{r}] \otimes [\textrm{D}_{c}]} [2Z - 2\bar{Z}]^{4}_{(11)}$ and $ ^{[\textrm{D}_{c}] \otimes [\textrm{D}_{c}]} [2R_{1}^{s}-2R_{2}^{t}]^{4}_{(11)}$) which are generated by the nonexpanding congruence of SD null strings, but they do not admit real null eigenvector of $C^{a}_{\ b}$. However, the following theorem holds true in both $\mathbf{CR}$ and $\mathbf{UR}$ (slightly weaker version of this theorem has been presented in \cite{Przanowski}).
\begin{Twierdzenie}
\label{Twierdzenie_nasze_o_zerowych_wektorach_wlasnych}
Let $\mu_{A}r_{\dot{B}}$ be a null eigenvector of the traceless Ricci tensor and let $m_{A}$ be any spinor such that $m_{A}\mu^{A} \ne 0$. Then $m_{A}r_{\dot{B}}$ is also a null eigenvector of the traceless Ricci tensor iff $C_{\dot{A}\dot{B}}r^{\dot{B}} =0$ where $C_{\dot{A}\dot{B}}$ is defined by $m_{A}$ and $\mu_{A}$ according to (\ref{traceless_Ricci_tensor_two_congruences}).
\end{Twierdzenie}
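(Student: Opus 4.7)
The plan is to prove this by a direct spinor-algebraic computation based on the decomposition (\ref{traceless_Ricci_tensor_two_congruences}). The eigenvalue equation (\ref{eigenvalues_equationss_forCab}) reads, in spinor form, $C_{AB\dot{C}\dot{D}}V^{B\dot{D}}=-\tfrac{1}{2}\lambda V_{A\dot{C}}$, and a vector of factored form $n_A s_{\dot{B}}$ is automatically null (the associated matrix has rank one), so the theorem compares the two factored would-be eigenvectors $\mu_A r_{\dot{B}}$ and $m_A r_{\dot{B}}$. Without loss of generality I would normalize $m_A\mu^A=1$; any rescaling $m_A\to\alpha m_A$ simply rescales $C_{\dot{A}\dot{B}}$ by a nonzero factor, so the condition $C_{\dot{A}\dot{B}}r^{\dot{B}}=0$ is independent of this choice.

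First I would contract (\ref{traceless_Ricci_tensor_two_congruences}) with $\mu^B r^{\dot{D}}$. The relevant scalar contractions are $\mu^B\mu_B=0$, $m^B m_B=0$, and (after careful use of the antisymmetry of $\in_{AB}$) $\mu^B m_B=+1$, $m^B\mu_B=-1$. They collapse the four pieces of the decomposition to
\begin{equation*}
C_{AB\dot{C}\dot{D}}\,\mu^B r^{\dot{D}} \;=\; m_A\,A_{\dot{C}\dot{D}}r^{\dot{D}} \;+\; \mu_A\,B_{\dot{C}\dot{D}}r^{\dot{D}}.
\end{equation*}
The hypothesis that $\mu_A r_{\dot{B}}$ is a null eigenvector with some eigenvalue $\lambda$ equates this to $-\tfrac{1}{2}\lambda\,\mu_A r_{\dot{C}}$, and linear independence of $m_A$ and $\mu_A$ forces the two scalar identities $A_{\dot{C}\dot{D}}r^{\dot{D}}=0$ and $B_{\dot{C}\dot{D}}r^{\dot{D}}=-\tfrac{1}{2}\lambda\,r_{\dot{C}}$.

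Next I would repeat the contraction with $m^B r^{\dot{D}}$ in place of $\mu^B r^{\dot{D}}$. The same scalar contractions yield
\begin{equation*}
C_{AB\dot{C}\dot{D}}\,m^B r^{\dot{D}} \;=\; -m_A\,B_{\dot{C}\dot{D}}r^{\dot{D}} \;-\; \mu_A\,C_{\dot{C}\dot{D}}r^{\dot{D}},
\end{equation*}
and substituting the relation $B_{\dot{C}\dot{D}}r^{\dot{D}}=-\tfrac{1}{2}\lambda\,r_{\dot{C}}$ obtained in the previous step converts the right-hand side to $\tfrac{1}{2}\lambda\,m_A r_{\dot{C}} - \mu_A\,C_{\dot{C}\dot{D}}r^{\dot{D}}$. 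Comparing with the null-eigenvector ansatz $-\tfrac{1}{2}\lambda'\,m_A r_{\dot{C}}$ and using linear independence of $m_A$ and $\mu_A$ once more, the $m_A$-component gives $\lambda'=-\lambda$, while the $\mu_A$-component vanishes if and only if $C_{\dot{A}\dot{B}}r^{\dot{B}}=0$. Both implications of the theorem are obtained at once.

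There is no structural obstacle here; the only delicate point is the bookkeeping of the sign $m^B\mu_B=-\mu^B m_B$ coming from the antisymmetry of $\in_{AB}$, together with the repeatedly used observation that any equation of the form $\alpha\,m_A+\beta\,\mu_A=0$ forces $\alpha=\beta=0$ whenever $m_A\mu^A\neq 0$. Together these reduce the theorem to reading off two decoupled scalar equations from each of the two contractions.
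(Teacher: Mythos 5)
Your proof is correct and follows essentially the same route as the paper's: contract the decomposition (\ref{traceless_Ricci_tensor_two_congruences}) with $\mu^{B}r^{\dot{D}}$ to extract $A_{\dot{C}\dot{D}}r^{\dot{D}}=0$ and $B_{\dot{C}\dot{D}}r^{\dot{D}}=-\tfrac{1}{2}\lambda r_{\dot{C}}$, then contract with $m^{B}r^{\dot{D}}$ and read off the equivalence with $C_{\dot{A}\dot{B}}r^{\dot{B}}=0$. Your sign bookkeeping ($\mu^{B}m_{B}=+1$, $m^{B}\mu_{B}=-1$) and the resulting eigenvalue flip $\lambda'=-\lambda$ agree with the paper's $C_{AB\dot{A}\dot{B}}m^{A}r^{\dot{A}}=\tfrac{1}{2}\lambda m_{B}r_{\dot{B}}$.
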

\begin{proof}
Without any loss of generality we put $\mu^{A}m_{A}=1$. Because the vector $\mu_{A}r_{\dot{A}}$ is null eigenvector of the traceless Ricci tensor then
\begin{eqnarray}
\label{pomocnicze_rownanie_na_wektor_wlasny_mu}
&&(m_{A}m_{B}A_{\dot{A}\dot{B}} + 2m_{(A} \mu_{B)} B_{\dot{A}\dot{B}} + \mu_{A}\mu_{B}C_{\dot{A}\dot{B}} ) \mu^{A}r^{\dot{A}} = -\frac{1}{2} \lambda \mu_{B} r_{\dot{B}}
\\ \nonumber
&& \Longrightarrow A_{\dot{A}\dot{B}} r^{\dot{A}}=0 \ , \ \ \ B_{\dot{A}\dot{B}} r^{\dot{A}}=-\frac{1}{2} \lambda r_{\dot{B}} \ , \ \ \  \lambda = \pm \sqrt{-2b}
\end{eqnarray}
From (\ref{pomocnicze_rownanie_na_wektor_wlasny_mu}) if follows that
\begin{equation}
C_{AB\dot{A}\dot{B}} m^{A} r^{\dot{A}} = \frac{1}{2} \lambda m_{B} r_{\dot{B}} \ \ \ \Longleftrightarrow \ \ \ C_{\dot{A}\dot{B}} r^{\dot{B}}=0
\end{equation}
\end{proof}
\textbf{Remark}. Obviously, Theorem \ref{Twierdzenie_nasze_o_zerowych_wektorach_wlasnych} holds true for the $C_{\dot{A}\dot{B}}=0$, i.e., for the nonexpanding cns and in this form it was presented in \cite{Przanowski}. However, for the proof it is enough to assume $C_{\dot{A}\dot{B}} r^{\dot{B}}=0 \ \Longrightarrow \ c=0$. It means, that Theorem \ref{Twierdzenie_nasze_o_zerowych_wektorach_wlasnych} holds true for special class of expanding cns such that $c=0$. 

In Table \ref{Tabela_Ricci_via_one_nonexpanding_congruence_1_cong} we gather information about the number of null eigenvectors of the traceless Ricci tensor in $\mathbf{UR}$ and $\mathbf{CR}$. $2^{TT}$ means, that both null eigenvectors are tangent to the null string and $2^{TN}$ means, that only one of them is tangent to the null string. In $\mathbf{CR}$ there is subtle difference between types $^{(2)} [4N]^{a}_{2}$ and $^{(2)} [4N]^{b}_{2}$ and this difference has not been recognized earlier in \cite{Przanowski}. 

\begin{table}[!ht]
\begin{center}
\begin{tabular}{|c|c|c|c|c|c|c|}   \hline
\multicolumn{3}{|c|}{Criteria}  & \multicolumn{2}{|c|}{Types in $\mathbf{UR}$}  & \multicolumn{2}{|c|}{Types in $\mathbf{CR}$} \\   \hline
 $b \ne 0$  & \multicolumn{2}{|c|}{ $ab \ne r^2$ } & $ ^{[\textrm{II}_{r}] \otimes [\textrm{D}_{r}]} [2R_{1}^{n}-2R_{2}^{n}]^{2}_{(22)}$ & $2^{TT}$ & $[2N_{1}-2N]^{b}_{4}$ & $2^{TT}$  \\ \cline{2-7}
   & \multicolumn{2}{|c|}{ $ab = r^2$, $b>0$ } & $ ^{[\textrm{D}_{r}] \otimes [\textrm{D}_{c}]} [2Z - 2\bar{Z}]^{4}_{(11)}$ & $0$ & $[2N_{1}-2N]_{2}$ & $4$ \\ \cline{2-7}
  & \multicolumn{2}{|c|}{ $ab=r^2$, $b<0$ } &  $ ^{[\textrm{D}_{c}] \otimes [\textrm{D}_{c}]} [2R_{1}^{s}-2R_{2}^{t}]^{4}_{(11)}$ & $0$ & $[2N_{1}-2N]_{2}$  & $4$ \\ 
  &  \multicolumn{2}{|c|}{  }  & $ ^{[\textrm{D}_{r}] \otimes [\textrm{D}_{r}]} [2R_{1}^{nst}-2R_{2}^{nst}]^{4}_{(11)}$ & $4$ & & \\ \cline{4-7}
  &  \multicolumn{2}{|c|}{  }  & $ ^{[\textrm{D}_{r}] \otimes [\textrm{D}_{r}]} [2R_{1}^{nst}-2R_{2}^{n}]^{3}_{(12)}$ & $3$ &  $[2N_{1}-2N]_{(1-2)}$ & $3$ \\ \hline
  $b = 0$ &  $B_{\dot{A}\dot{B}} \ne 0$ & $r \ne 0$  &$ ^{[\textrm{III}_{r}] \otimes [\textrm{N}_{r}]} [4R^{n}]^{1}_{(4)}$ & $1$ & $[4N]^{b}_{4}$ & $1$ \\ \cline{3-7}
  &  & $r=0$,  $a \ne 0 $ &  $ ^{[\textrm{N}_{r}] \otimes [\textrm{N}_{r}]} [4R^{nt}]^{2}_{(3)}$ & $1$ & $[4N]_{3}$ & $1$ \\ 
  &  &  &  $ ^{[\textrm{N}_{r}] \otimes [\textrm{N}_{r}]} [4R^{ns}]^{2}_{(3)}$ & $1$ & &  \\ \cline{3-7} 
  &  & $r=a=0$ &  $ ^{[-] \otimes [\textrm{N}_{r}]} [4R^{n}]^{2}_{(2)}$ & $2^{TN}$ & $^{(2)} [4N]^{a}_{2}$ & $2^{TN}$ \\ \cline{2-7} 
   &  $B_{\dot{A}\dot{B}} = 0$ & $r =0$, $a \ne 0$  & $ ^{[\textrm{N}_{r}] \otimes [-]} [4R^{n}]^{2}_{(2)}$ & $2^{TT}$ & $^{(2)} [4N]^{b}_{2}$ & $2^{TT}$ \\ \cline{3-7}
   &  &  $r=a=0$ & $ ^{[-] \otimes [-]} [4R^{nst}]^{4}_{(1)}$ & $4$ & $[4N]_{1}$ &  $4$ \\ 
   &  &    & $ ^{[-] \otimes [-]} [4R^{nst}]^{3}_{(2)}$ & $3$ & $^{(3)} [4N]_{2}$ & $3$ \\ \hline
\end{tabular}
\caption{Possible types of the traceless Ricci tensor in spaces equipped with one nonexpanding congruence of the SD null strings.}
\label{Tabela_Ricci_via_one_nonexpanding_congruence_1_cong}
\end{center}
\end{table}

\section{Spaces equipped with two congruences of the SD null strings}
\label{Sekcja_dwie_komplementarne_koonggruencje_strun}
\setcounter{equation}{0}

\subsection{Basic concepts and integrability conditions}

\begin{Definicja}
Two congruences of SD null strings $\Sigma$ and $\widetilde{\Sigma}$ are \textsl{distinct (complementary, transversal)}, if 
$\Sigma \wedge \widetilde{\Sigma} \ne 0$.
\end{Definicja}
\begin{Twierdzenie}
Let $\Sigma$-congruence be generated by the spinor $m_{A}$ and the $\widetilde{\Sigma}$-congruence be generated by the spinor $\mu_{A}$. These congruences are distinct, iff $m_{A}\mu^{A} \ne 0$.
\end{Twierdzenie}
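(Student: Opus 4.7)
The plan is to translate the geometric condition $\Sigma \wedge \widetilde{\Sigma} \ne 0$ into a purely algebraic spinorial inequality by computing the wedge product directly. No derivatives or integrability arguments enter, so the whole proof reduces to bookkeeping with the identities already collected in (\ref{identities}).

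First, I would recall from the paragraph immediately after Theorem \ref{Definicja_struny_zerowej} that the conditions $\Sigma \wedge \Sigma = 0$ and $\ast \Sigma = \Sigma$ already force the 2-form generating a cns to have the form $\Sigma = m_A m_B S^{AB}$ for a nowhere-vanishing spinor $m_A$. Applying the same remark to $\widetilde{\Sigma}$ gives $\widetilde{\Sigma} = \mu_A \mu_B S^{AB}$. Consequently
\[
\Sigma \wedge \widetilde{\Sigma} \;=\; m_A m_B \mu_C \mu_D \; S^{AB} \wedge S^{CD}.
\]

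Next, I would raise indices in the identity $\tfrac{1}{4} S^{AB} \wedge S_{CD} = - \delta^{A}_{(C}\delta^{B}_{D)} \, vol$ from (\ref{identities}) to obtain $S^{AB} \wedge S^{CD} = -2(\in^{CA}\in^{DB} + \in^{CB}\in^{DA}) \, vol$. Substituting this and using $\mu^A = \in^{CA}\mu_C$ from (\ref{reguly_podnoszenia_wskaznikow}), both terms collapse to the same scalar $(m_A \mu^A)^2$, and one lands at
\[
\Sigma \wedge \widetilde{\Sigma} \;=\; -4\, (m_A \mu^A)^{2} \, vol.
\]
Since $vol$ is nowhere vanishing, the equivalence $\Sigma \wedge \widetilde{\Sigma} \ne 0 \iff m_A \mu^A \ne 0$ follows at once.

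There is really no hard step; the only point that requires care is the symmetrization bookkeeping and the sign conventions attached to raising undotted indices with the antisymmetric $\in^{AB}$, so that the two contractions from the symmetrizer $\delta^{A}_{(C}\delta^{B}_{D)}$ are correctly combined into a single squared inner product. Once this is handled cleanly, the statement is immediate.
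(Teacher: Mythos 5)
Your proof is correct and follows essentially the same route as the paper: both reduce $\Sigma \wedge \widetilde{\Sigma} \ne 0$ to the nonvanishing of the scalar $(m_{A}\mu^{A})^{2}$ by expanding the wedge product with the identities (\ref{identities}). The only discrepancy is the overall numerical factor (you obtain $-4$, the paper states $2$; a direct check with $m_{A}=(1,0)$, $\mu_{A}=(0,1)$ supports your $-4$), which is immaterial since only the nonvanishing of the coefficient enters the argument.
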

\begin{proof} Because both congruences are self-dual, then $\Sigma=m_{A}m_{B} S^{AB}$ and $\widetilde{\Sigma}=\mu_{A}\mu_{B} S^{AB}$. Using (\ref{identities}) we find $\Sigma \wedge \widetilde{\Sigma} = 2 (m^{A}\mu_{A})^{2} vol$. Consequently $\Sigma \wedge \widetilde{\Sigma} \ne 0 \Longrightarrow m^{A}\mu_{A} \ne 0$. Conversely, consider two spinors $m_{A}$ and $\mu_{A}$ which generate cns such that $m_{A}\mu^{A} \ne 0$ . Then $ m_{A}\mu^{A} \ne 0 \Longrightarrow \Sigma \wedge \widetilde{\Sigma} \ne 0 $. 
\end{proof}

Spaces which admit two distinct congruences of the SD (or ASD) null strings are known as a \textsl{(complex or real) para-Hermite} spaces \cite{Flaherty, Przanowski_Formanski_Chudecki}. Spinors $m_{A}$ and $\mu_{A}$ can be normalized $m_{A} \mu^{A} =1$ without any loss of generality. We deal now with the equations for the $\Sigma$-congruence and $\widetilde{\Sigma}$-congruence
\begin{subequations}
\begin{eqnarray}
\label{dwie_struny_Sigma}
&& \nabla_{A\dot{B}} m_{C} = Z_{A\dot{B}} m_{C} + \in_{AC} M_{\dot{B}}
\\ 
\label{dwie_struny_Sigma_tilda}
&& \nabla_{A\dot{B}} \mu_{C} = \widetilde{Z}_{A\dot{B}} \mu_{C} + \in_{AC} \widetilde{M}_{\dot{B}}
\end{eqnarray}
\end{subequations}
and with the integrability conditions of the equations (\ref{dwie_struny_Sigma}) and (\ref{dwie_struny_Sigma_tilda})

\begin{subequations}
\begin{eqnarray}
\label{integrability_cond_dwie_struny_1}
-2 m^{S} \, C_{CSMA} + \frac{R}{6} m_{(M} \in_{A)C} &=& m_{C} \, \nabla_{(M}^{\ \ \ \dot{N}}Z_{A) \dot{N}} 
- \in_{C(M} ( Z_{A)\dot{N}}M^{\dot{N}} - \nabla_{A)\dot{N}}M^{\dot{N}} ) \ \ \ \ \ \ \ \ \ \ 
\\ 
\label{integrability_cond_dwie_struny_2}
-2 \mu^{S} \, C_{CSMA} + \frac{R}{6} \mu_{(M} \in_{A)C} &=& \mu_{C} \, \nabla_{(M}^{\ \ \ \dot{N}}\widetilde{Z}_{A) \dot{N}} - \in_{C(M} ( \widetilde{Z}_{A)\dot{N}}\widetilde{M}^{\dot{N}} - \nabla_{A)\dot{N}}\widetilde{M}^{\dot{N}} ) \ \ \ \ \ \ \ \ \ \ 
\\
\label{integrability_cond_dwie_struny_3}
2m^{S} \, C_{CS \dot{N}\dot{B}} &=& m_{C} \nabla^{A}_{\ \; (\dot{N}} Z_{|A|\dot{B})} + Z_{C(\dot{B}} M_{\dot{N})} - \nabla_{C(\dot{B}} M_{\dot{N})} \ \ \ \ \ 
\\
\label{integrability_cond_dwie_struny_4}
2\mu^{S} \, C_{CS \dot{N}\dot{B}} &=& \mu_{C} \nabla^{A}_{\ \; (\dot{N}} \widetilde{Z}_{|A|\dot{B})} + \widetilde{Z}_{C(\dot{B}} \widetilde{M}_{\dot{N})} - \nabla_{C(\dot{B}} \widetilde{M}_{\dot{N})} \ \ \ \ \ 
\end{eqnarray}
\end{subequations}
Because $m_{A}\mu^{A}=1 \ \Longrightarrow \ \nabla_{B\dot{B}} (m_{A}\mu^{A})=0$ we find the relation between the expansions and Sommers vectors for the $\Sigma$- and $\widetilde{\Sigma}$-congruences
\begin{equation}
\label{zwiazek_miedzy_Sommersami}
\widetilde{Z}_{A\dot{B}} + Z_{A\dot{B}} = m_{A} \widetilde{M}_{\dot{B}} - \mu_{A} M_{\dot{B}}
\end{equation}
In what follows we use the Eq. (\ref{zwiazek_miedzy_Sommersami}) to eliminate $\widetilde{Z}_{A\dot{B}}$.

Integrability conditions (\ref{integrability_cond_dwie_struny_1})-(\ref{integrability_cond_dwie_struny_2}) constitute the set of 12 equations. One quickly finds that all curvature coefficients $C^{(i)}$ are determined by
\begin{equation}
\label{SD_Weyl_spinor_two_congruences}
C^{(1)}=C^{(5)}=0 \ , \ \ \  \nabla_{(M}^{\ \ \ \dot{N}} Z_{A)\dot{N}} = C^{(4)} \, m_{A} m_{M} + 3C^{(3)} \, m_{(A}\mu_{M)} + 3C^{(2)} \, \mu_{A}\mu_{M}
\end{equation}
Eqs. (\ref{integrability_cond_dwie_struny_1}) and (\ref{integrability_cond_dwie_struny_2}) reduces to the following ones
\begin{subequations}
\label{dwie_complementarne_struny_obie}
\begin{eqnarray}
\label{dwie_complementarne_struny_pierwsza}
&&-2C^{(2)} \, \mu_{A} - \frac{6C^{(3)}-R}{6} \, m_{A} = Z_{A \dot{B}}M^{\dot{B}} - \nabla_{A \dot{B}}M^{\dot{B}}  
\\ 
\label{dwie_complementarne_struny_druga}
&&2C^{(4)} \, m_{A} + \frac{6C^{(3)}-R}{6} \, \mu_{A} = Z_{A \dot{B}} \widetilde{M}^{\dot{B}} + \nabla_{A \dot{B}}\widetilde{M}^{\dot{B}} + \mu_{A} M_{\dot{B}} \widetilde{M}^{\dot{B}}
\end{eqnarray}
\end{subequations}
Eqs. (\ref{dwie_complementarne_struny_pierwsza}) and (\ref{dwie_complementarne_struny_druga}) have to be consistent with each other and with (\ref{SD_Weyl_spinor_two_congruences}). After simple calculations we find that the curvature scalar $R$ has the form (\ref{skalar_krzywizny_R_jawnie}). We are left with three constraint equations which can be rearranged into the form
\begin{subequations}
\label{dwie_struny_wiaz_calosc}
\begin{eqnarray}
\label{dwie_struny_wiaz_1}
&& \nabla^{M \dot{N}} \left( m_{M} (2m^{A}Z_{A\dot{N}} - 3M_{\dot{N}} ) \right) = 0
\\
\label{dwie_struny_wiaz_2}
&& \nabla^{M \dot{N}} \left( \mu_{M} (2\mu^{A}Z_{A\dot{N}} + \widetilde{M}_{\dot{N}} ) \right) = 0
\\
\label{dwie_struny_wiaz_3}
&& \nabla^{M \dot{N}} ( m_{M} \widetilde{M}_{\dot{N}} + \mu_{M} M_{\dot{N}}  ) = 0
\end{eqnarray}
\end{subequations}
Gathering, from 12 equations (\ref{integrability_cond_dwie_struny_1}) and (\ref{integrability_cond_dwie_struny_2}) we obtained SD conformal curvature coefficients $C^{(i)}$, $i=1,...,5$ (5 equations), curvature scalar $R$ (1 equation) and the constraints (\ref{dwie_struny_wiaz_calosc}) (3 equations). The remaining 3 equations are identically satisfied. 

Integrability conditions (\ref{integrability_cond_dwie_struny_3})-(\ref{integrability_cond_dwie_struny_4}) give 12 equations for the 9 components of the traceless Ricci tensor. Three of these equations are identically satisfied and the rest bring us to the formulas 
\begin{subequations}
\label{definition_of_spinors_in_tracelessRicci}
\begin{eqnarray}
\label{definition_A_AB}
2A_{\dot{A}\dot{B}} &:=& \mu_{N} \nabla^{N}_{\ ( \dot{A}} \widetilde{M}_{\dot{B})} + \mu_{N} Z^{N}_{\ \; ( \dot{A}} \widetilde{M}_{\dot{B})} +  \widetilde{M}_{(\dot{A}} \widetilde{M}_{\dot{B})}
\\
\label{definition_B_AB}
2C_{\dot{A}\dot{B}} &:=& m_{N} \nabla^{N}_{\ ( \dot{A}} M_{\dot{B})} - m_{N} Z^{N}_{\ \; ( \dot{A}} M_{\dot{B})} 
\\
\label{definition_C_AB}
2B_{\dot{A}\dot{B}} &:=& \mu_{N} Z^{N}_{\ \; (\dot{A}} M_{\dot{B})} - \mu_{N} \nabla^{N}_{\ (\dot{A}} M_{\dot{B})} + \nabla_{N (\dot{A}} Z^{N}_{\ \, \dot{B})}
\end{eqnarray}
\end{subequations}
Formulas (\ref{definition_of_spinors_in_tracelessRicci}) prove important result. Expansions of congruences, the Sommers vector ($M_{\dot{A}}$, $\widetilde{M}_{\dot{B}}$, $Z_{A\dot{B}}$) and their covariant derivatives determine completely the form of the traceless Ricci tensor.

\subsection{SD Weyl spinor}

\begin{Twierdzenie}
\label{twierdzenie_o_nieistnieniu_dwoch_wsteg_wtypie_N}
If complex (real) space admits two distinct congruences of SD null strings then SD Weyl spinor $C_{ABCD}$ cannot be of the type $[\textrm{N}]$ ($[\textrm{N}_{r}]$).
\end{Twierdzenie}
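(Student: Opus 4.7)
The plan is a short contradiction argument based on counting Penrose (principal null) spinors. If two distinct cns exist, generated by $m_A$ and $\mu_A$ with $m_A\mu^A \ne 0$ (normalised to $1$), then the integrability conditions already yielded (\ref{SD_Weyl_spinor_two_congruences}), i.e., $C^{(1)} = C^{(5)} = 0$. Looking at the expansion (\ref{krzywizna_rozlozona_na_wspolll}) and contracting with $m^Am^Bm^Cm^D$ (respectively $\mu^A\mu^B\mu^C\mu^D$), the terms involving at least one factor of $m_Am^A = 0$ (resp. $\mu_A\mu^A = 0$) drop out, leaving the identifications $C^{(1)} = 2\, C_{ABCD}\, m^Am^Bm^Cm^D$ and $C^{(5)} = 2\, C_{ABCD}\, \mu^A\mu^B\mu^C\mu^D$. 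Thus $C^{(1)} = C^{(5)} = 0$ says exactly that both $m_A$ and $\mu_A$ are Penrose spinors of $C_{ABCD}$. (This is of course consistent with Theorem \ref{twierdzenie_o_spinorze_generujacym_wstege_i_spinorze_Penrosa} applied to each cns separately.)

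The next step is to invoke the algebraic meaning of type $[\textrm{N}]$: the SD Weyl spinor factorises, up to a nonzero scalar, as $C_{ABCD} = \lambda_A\lambda_B\lambda_C\lambda_D$ for some nowhere vanishing spinor $\lambda_A$, which is the \emph{unique} Penrose spinor and has multiplicity $4$. Therefore every Penrose spinor must be proportional to $\lambda_A$. Applied to our situation, this forces $m_A \propto \lambda_A \propto \mu_A$, which immediately gives $m_A\mu^A = 0$ and contradicts the distinctness condition.

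For the real ($\mathbf{UR}$) statement, the type $[\textrm{N}_r]$ admits the same factorised form with a real Penrose spinor $\lambda_A$ of multiplicity $4$; the chain of implications above is unaffected by restriction to real objects, so the proof carries over verbatim. I do not anticipate any real obstacle: the whole argument is essentially the observation that two distinct cns supply two linearly independent Penrose spinors, which is incompatible with any Petrov-Penrose type whose unique principal spinor has multiplicity $4$.
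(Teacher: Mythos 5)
Your proof is correct, and it reaches the contradiction by a slightly different mechanism than the paper. Both arguments start from the same place, namely the consequence $C^{(1)}=C^{(5)}=0$ of the integrability conditions (\ref{SD_Weyl_spinor_two_congruences}), and your identifications $C^{(1)}=2\,C_{ABCD}m^{A}m^{B}m^{C}m^{D}$ and $C^{(5)}=2\,C_{ABCD}\mu^{A}\mu^{B}\mu^{C}\mu^{D}$ (using $(\mu_{A}m^{A})^{4}=(m_{A}\mu^{A})^{4}=1$) are exactly right. From there the paper characterizes type $[\textrm{N}]$ by the existence of a nonzero $\alpha^{A}$ with $C_{ABCD}\alpha^{A}=0$ (the quadruple principal spinor annihilates $C_{ABCD}$ under a \emph{single} contraction), decomposes $\alpha^{A}=\alpha m^{A}+\beta\mu^{A}$, and concludes $\alpha C^{(i)}=\beta C^{(i)}=0$ for $i=2,3,4$, forcing $\alpha=\beta=0$ since $C_{ABCD}\ne 0$; the contradiction is that the putative principal spinor vanishes. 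You instead use only the weaker principal-spinor condition $C_{ABCD}\alpha^{A}\alpha^{B}\alpha^{C}\alpha^{D}=0$ together with the factorization $C_{ABCD}=\lambda_{A}\lambda_{B}\lambda_{C}\lambda_{D}$, so that $(\lambda_{A}\alpha^{A})^{4}=0$ forces every Penrose spinor to be proportional to $\lambda_{A}$; the contradiction is then $m_{A}\propto\mu_{A}$, i.e.\ $m_{A}\mu^{A}=0$, against distinctness. Your route has the small advantage of making transparent the general principle at work (two distinct cns supply two non-proportional Penrose spinors, which rules out any type whose principal spinors span a single direction), and it reuses Theorem \ref{twierdzenie_o_spinorze_generujacym_wstege_i_spinorze_Penrosa} rather than recomputing; the paper's version is more self-contained in that it needs only the decomposition (\ref{krzywizna_rozlozona_na_wspolll}) and the single-contraction characterization of type $[\textrm{N}]$. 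Your remark that the argument passes unchanged to $[\textrm{N}_{r}]$ in $\mathbf{UR}$ is also correct, since the quadruple principal spinor is real there.
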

\begin{proof} Assume, that SD Weyl spinor is of the type $[\textrm{N}]$ ($[\textrm{N}_{r}]$). Then there exists a nonzero complex (real) spinor $\alpha^{A}$ such that $C_{ABCD} \alpha^{A}=0$. Decomposing spinor $\alpha^{A}$ according to $\alpha^{A} = \alpha m^{A} + \beta \mu^{A}$, putting this into (\ref{krzywizna_rozlozona_na_wspolll}) and remembering that $C^{(1)}=C^{(5)}=0$ we find $\alpha C^{(i)}=0=\beta C^{(i)}$, $i=2,3,4$. All $C^{(i)}$ cannot be simultaneously equal zero, so the only solution is $\alpha=\beta=0$ what contradicts the assumption that spinor $\alpha^{A}$ in nonzero.
\end{proof}

Because $C^{(1)} = C^{(5)} = 0$ one arrives to the following formula for SD Weyl spinor
\begin{equation}
\label{krzywizna_rozlozona_na_wspolll_para_Hermite}
C_{ABCD} =:  m_{(A} \mu_{B} \left( 2 C^{(2)} \, \mu_{C}\mu_{D)} + 3 C^{(3)} \, m_{C}\mu_{D)} 
 + 2 C^{(4)} \, m_{C} m_{D)} \right)
\end{equation}
All possible types and forms of $C_{ABCD}$ are gathered in the Table \ref{Tabela_Petrov_Penrose_types} in which the following abbreviation has been used
\begin{equation}
\delta := 9 C^{(3)}C^{(3)} - 16 C^{(2)} C^{(4)}
\end{equation}

\begin{table}[ht]
\begin{center}
\begin{tabular}{|c|c|c|c|}   \hline
\multicolumn{2}{|c|}{Types} & Conditions  &  SD Weyl spinor  \\  \cline{1-2}
$\mathbf{CR}$ & $\mathbf{UR}$ &   &   \\  \hline
$\textrm{[I]}$ & $[\textrm{I}_{r}]$ &$C^{(2)} \ne 0$, $C^{(4)} \ne 0$, $\delta >0 $ &  $C_{ABCD}=2C^{(2)} m_{(A}\mu_{B}p^{+}_{C}p^{-}_{D)}$ \\ \cline{2-3}
& $[\textrm{I}_{rc}]$ & $C^{(2)} \ne 0$, $C^{(4)} \ne 0$, $\delta <0 $ &   $p^{\pm}_{A} := \mu_{A}  + \frac{1}{4C^{(2)}} \big( 3C^{(3)} \pm \sqrt{\delta} \big) m_{A}$ \\ \hline
$\textrm{[II]}$ & $[\textrm{II}_{r}]$ & $C^{(2)} = 0$, $C^{(3)} \ne 0$, $C^{(4)} \ne 0$ &  $C_{ABCD}= m_{(A}m_{B}\mu_{C} p_{D)}$ \\ 
 & & & $p_{A} := 3C^{(3)}\mu_{A} + 2 C^{(4)} m_{A}$ \\ \cline{3-4}
 & & $C^{(4)} = 0$, $C^{(2)} \ne 0$, $C^{(3)} \ne 0$ &  $C_{ABCD}= \mu_{(A}\mu_{B} m_{C} p_{D)} $ \\
&  & &  $p_{A} := 2C^{(2)}\mu_{A} + 3 C^{(3)} m_{A}$ \\ \cline{3-4}
 & & $C^{(2)} \ne 0$, $C^{(4)} \ne 0$, $\delta = 0$ &  $C_{ABCD}=2C^{(2)} m_{(A}\mu_{B}p_{C}p_{D)}$ \\ 
 &  &  & $p_{A} := \mu_{A} + \frac{3C^{(3)}}{4C^{(2)}} m_{A} $   \\ \hline
 $\textrm{[D]}$ & $[\textrm{D}_{r}]$ & $C^{(2)} =C^{(4)} = 0$, $C^{(3)} \ne 0$ &  $C_{ABCD}=3C^{(3)} m_{(A}m_{B}\mu_{C}\mu_{D)}$ \\ \hline
 $\textrm{[III]}$ &  $[\textrm{III}_{r}]$& $C^{(2)} = C^{(3)} = 0$, $C^{(4)} \ne 0$ &  $C_{ABCD}=2C^{(4)} m_{(A}m_{B}m_{C}\mu_{D)}$ \\ \cline{3-4}
 & & $C^{(3)} = C^{(4)} = 0$, $C^{(2)} \ne 0$ &  $C_{ABCD}=2C^{(2)} m_{(A}\mu_{B}\mu_{C}\mu_{D)}$ \\ \hline
 $[-]$ & $[-]$ & $C^{(2)} = C^{(3)} = C^{(4)} = 0$ &  $C_{ABCD}=0$ \\ \hline
\end{tabular}
\caption{Petrov-Penrose types of the SD Weyl spinor in the para-Hermite spaces.}
\label{Tabela_Petrov_Penrose_types}
\end{center}
\end{table}

Petrov-Penrose types of $C_{ABCD}$ in para-Hermite spaces depends strongly on expansions of the congruences (compare formulas (\ref{dwie_complementarne_struny_obie})). For example, if both congruences are nonexpanding, then $C^{(2)}=0 = C^{(4)}$ and $6C^{(3)}=R$. Consequently, $2C_{ABCD} = R \, m_{(A}m_{B} \mu_{C}\mu_{D)}$ so for $R\ne0$ we have type $[\textrm{D}]$ in $\mathbf{CR}$ or type $[\textrm{D}_{r}]$ in $\mathbf{UR}$ and for $R=0$ we have type $[-]$. In the case of Einstein spaces, the conditions $C_{ab}=0$ and $R=-4\Lambda$ reduce the possible Petrov-Penrose types of $C_{ABCD}$. In this case analysis is less straightforward (it is necessary to use the Bianchi identities) and we do not present the details here. Some portion of this analysis can be found in \cite{Przanowski_Formanski_Chudecki}. The results are gather in Tables \ref{Tabela_Petrov_Penrose_types_via_properties} and \ref{Tabela_Petrov_Penrose_types_via_properties_Einstein_case}.

\begin{table}[ht]
\begin{center}
\begin{tabular}{|c|c|c|c|}   \hline
\multicolumn{4}{|c|}{Types in $\mathbf{CR}$} \\ \hline
Curvature scalar & $M_{\dot{A}} \ne 0$, $\widetilde{M}_{\dot{A}} \ne 0$& $M_{\dot{A}}=0$, $\widetilde{M}_{\dot{A}} \ne 0$ &  $M_{\dot{A}}=0=\widetilde{M}_{\dot{A}}$  \\  \hline
$R\ne0$ & $[\textrm{I}]$, $[\textrm{II}]$, $[\textrm{D}]$, & $[\textrm{II}]$, $[\textrm{D}]$ &  $[\textrm{D}]$ \\ \cline{1-1} \cline{3-4}
$R=0$ & $[\textrm{III}]$, $[-]$ & $[\textrm{III}]$, $[-]$ &  $[-]$ \\ \hline
\multicolumn{4}{|c|}{Types in $\mathbf{UR}$} \\ \hline
Curvature scalar & $M_{\dot{A}} \ne 0$, $\widetilde{M}_{\dot{A}} \ne 0$& $M_{\dot{A}}=0$, $\widetilde{M}_{\dot{A}} \ne 0$ &  $M_{\dot{A}}=0=\widetilde{M}_{\dot{A}}$  \\  \hline
$R\ne0$ & $[\textrm{I}_{r}]$, $[\textrm{I}_{rc}]$, $[\textrm{II}_{r}]$, $[\textrm{D}_{r}]$, & $[\textrm{II}_{r}]$, $[\textrm{D}_{r}]$ &  $[\textrm{D}_{r}]$ \\ \cline{1-1} \cline{3-4}
$R=0$ & $[\textrm{III}_{r}]$, $[-]$ & $[\textrm{III}_{r}]$, $[-]$ &  $[-]$ \\ \hline
\end{tabular}
\caption{Petrov-Penrose types of SD Weyl spinor in para-Hermite spaces via properties of the congruences of the SD null strings.}
\label{Tabela_Petrov_Penrose_types_via_properties}
\end{center}
\end{table}

\begin{table}[ht]
\begin{center}
\begin{tabular}{|c|c|c|c|}   \hline
\multicolumn{4}{|c|}{Types in $\mathbf{CR}$} \\ \hline
Cosmological constant & $M_{\dot{A}} \ne 0$, $\widetilde{M}_{\dot{A}} \ne 0$& $M_{\dot{A}}=0$, $\widetilde{M}_{\dot{A}} \ne 0$ &  $M_{\dot{A}}=0=\widetilde{M}_{\dot{A}}$  \\  \hline
$\Lambda \ne 0$ & $[\textrm{D}]$, $[-]$ & not allowed &  $\textrm{[D]}$ \\ \cline{1-1} \cline{3-4}
$\Lambda =0 $ & & $[-]$ & $[-]$ \\ \hline
\multicolumn{4}{|c|}{Types in $\mathbf{UR}$} \\ \hline
Cosmological constant & $M_{\dot{A}} \ne 0$, $\widetilde{M}_{\dot{A}} \ne 0$& $M_{\dot{A}}=0$, $\widetilde{M}_{\dot{A}} \ne 0$ &  $M_{\dot{A}}=0=\widetilde{M}_{\dot{A}}$  \\  \hline
$\Lambda \ne 0$ & $[\textrm{D}_{r}]$, $[-]$ & not allowed &  $[\textrm{D}_{r}]$ \\ \cline{1-1} \cline{3-4}
$\Lambda =0 $ & & $[-]$ & $[-]$ \\ \hline
\end{tabular}
\caption{Petrov-Penrose types of SD Weyl spinor in Einstein para-Hermite spaces via properties of the congruences of the SD null strings.}
\label{Tabela_Petrov_Penrose_types_via_properties_Einstein_case}
\end{center}
\end{table}

\subsection{Traceless Ricci tensor}
\label{subsekcja_Analysis_of_algebraic_structure_of_traceless_Ricci_tensor}

The traceless Ricci tensor in para-Hermite spaces is completely determined via Sommers vector, expansions of both congruences and covariant derivatives of these objects (compare Eqs. (\ref{definition_of_spinors_in_tracelessRicci})). The most general case is when both congruences are expanding. It seems (somehow surprisingly) that in this case all Plebański-Przanowski types of the traceless Ricci tensor are admitted. The case with one nonexpanding congruence has been discussed in subsection \ref{subsekcja_Analysis_of_algebraic_structure_of_traceless_Ricci_tensor_one_congr} and this discussion remains valid in spaces equipped with one nonexpanding and one expanding cns. The last case concerns both congruences being nonexpanding. One finds that $M_{\dot{A}}=\widetilde{M}_{\dot{A}}=0$ implies $A_{\dot{A}\dot{B}}=C_{\dot{A}\dot{B}}=0$ and consequently $a=c=n=r=s=0$. Relations (\ref{Riccitensor_uproszczoneskalary_1}) and (\ref{Wielomian_charakterystyczny_uproszczony_jednastruna}) are still valid but Plebański spinors take the form
\begin{equation}
V_{ABCD} =  16b \, m_{(A}m_{B} \mu_{C}\mu_{D)} \ , \ \ \ 
V_{\dot{A}\dot{B}\dot{C}\dot{D}} = - 8 B_{(\dot{A}\dot{B}}B_{\dot{C}\dot{D})}
\end{equation}
However, these are not the only restrictions on the traceless Ricci tensor implied by the existence of two nonexpanding cns. We have transparent corollary which follows from the Theorem \ref{Twierdzenie_nasze_o_zerowych_wektorach_wlasnych}
\begin{Wniosek}
\label{wniosek_o_zerowych_wektorach_wlasnych_Ricci}
Assume, that the traceless Ricci tensor generated by two distinct, nonexpanding congruences of SD null strings admits a null eigenvector. Then
\begin{itemize}
\item there are two linearly independent null eigenvectors, the first is tangent to the $\Sigma$-congruence, the second to the $\widetilde{\Sigma}$-congruence
\item there are four linearly independent null eigenvectors, the first pair is tangent to the $\Sigma$-congruence, the second pair to the $\widetilde{\Sigma}$-congruence \hfill $\blacksquare$ 
\end{itemize}
\end{Wniosek}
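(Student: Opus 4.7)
My plan is to reduce the statement to a direct application of Theorem \ref{Twierdzenie_nasze_o_zerowych_wektorach_wlasnych} combined with a short algebraic analysis of the degeneracy of the dotted spinor $B_{\dot{A}\dot{B}}$.

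First I would specialize the formulas (\ref{definition_of_spinors_in_tracelessRicci}) to the case at hand. Since both congruences are nonexpanding we have $M_{\dot{A}} = 0 = \widetilde{M}_{\dot{A}}$, which by (\ref{definition_A_AB}) and (\ref{definition_B_AB}) immediately forces $A_{\dot{A}\dot{B}} = 0 = C_{\dot{A}\dot{B}}$. The decomposition (\ref{traceless_Ricci_tensor_two_congruences}) therefore collapses to
\begin{equation*}
C_{AB\dot{M}\dot{N}} = 2 m_{(A}\mu_{B)} B_{\dot{M}\dot{N}}.
\end{equation*}
This is the central object I would work with; the remainder of the proof is an inspection of its null eigenvectors.

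Next I would apply Theorem \ref{Twierdzenie_nasze_o_zerowych_wektorach_wlasnych} twice. Because $C_{\dot{A}\dot{B}}$ is identically zero, the condition $C_{\dot{A}\dot{B}}r^{\dot{B}}=0$ is automatic, so the theorem asserts that whenever $\mu_{A}r_{\dot{B}}$ is a null eigenvector then so is $m_{A}r_{\dot{B}}$. By swapping the roles of $m_{A}$ and $\mu_{A}$ in the decomposition (\ref{traceless_Ricci_tensor_two_congruences})—an operation under which $A_{\dot{A}\dot{B}}$ and $C_{\dot{A}\dot{B}}$ are interchanged and $A_{\dot{A}\dot{B}}$ also vanishes—the same reasoning delivers the reverse implication: a null eigenvector $m_{A}r_{\dot{B}}$ always comes together with $\mu_{A}r_{\dot{B}}$. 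Thus null eigenvectors of the two "tangent" types are paired one-to-one.

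Then I would show that up to these two tangent types there are no further null eigenvectors. A null eigenvector has the form $\alpha^{A}r^{\dot{B}}$, and writing $\alpha^{A} = p\,m^{A} + q\,\mu^{A}$ the eigenvalue equation (\ref{eigenvalues_equationss_forCab}) with the Ricci tensor above projects into its $m_{A}$- and $\mu_{A}$-parts to give the two separate conditions $B_{\dot{M}\dot{N}}r^{\dot{N}} = \tfrac{1}{2}\lambda\, r_{\dot{M}}$ and $B_{\dot{M}\dot{N}}r^{\dot{N}} = -\tfrac{1}{2}\lambda\, r_{\dot{M}}$. Unless both sides vanish (i.e.\ $\lambda = 0$ and $r^{\dot{B}}$ lies in the kernel of $B_{\dot{M}\dot{N}}$) at least one of $p,q$ must be zero, so $\alpha^{A}$ is proportional to $m^{A}$ or $\mu^{A}$ and the eigenvector is tangent to one of the two congruences.

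Finally I would count, using the standard fact that a symmetric 2-index dotted spinor $B_{\dot{A}\dot{B}}$ is either non-degenerate, in which case it has two linearly independent eigenspinors $r^{(\pm)}_{\dot{B}}$ with opposite eigenvalues $\pm\sqrt{-b/2}$, or it is of the form $\beta_{\dot{A}}\beta_{\dot{B}}$, in which case it has a single eigenspinor $\beta_{\dot{B}}$ with eigenvalue zero. In the first regime each eigenspinor yields one null eigenvector tangent to $\Sigma$ and one tangent to $\widetilde{\Sigma}$, giving four in total distributed into two pairs as claimed. In the second regime the degenerate case of the projection argument must be revisited: with $\lambda = 0$ and $B_{\dot{M}\dot{N}}\beta^{\dot{N}} = 0$, any $\alpha^{A}\beta^{\dot{B}}$ is an eigenvector, and among this family the two canonical representatives are the one tangent to $\Sigma$ and the one tangent to $\widetilde{\Sigma}$, giving the dichotomy stated in the corollary. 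The main subtlety, and the place where I would be most careful, is precisely this degenerate sub-case, since there the raw counting of null eigenvectors is infinite and one must argue that only the two distinguished directions are meant by the statement.
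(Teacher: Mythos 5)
Your argument is correct and is essentially the route the paper intends: the Corollary is stated there without a written proof, as a ``transparent'' consequence of Theorem \ref{Twierdzenie_nasze_o_zerowych_wektorach_wlasnych}, and your reduction of the traceless Ricci tensor to $C_{AB\dot{M}\dot{N}}=2m_{(A}\mu_{B)}B_{\dot{M}\dot{N}}$ followed by the projection onto the $m_{A}$- and $\mu_{A}$-components and the case analysis on the (non)degeneracy of $B_{\dot{A}\dot{B}}$ is exactly the omitted verification, including the correct handling of the rank-one case where the eigenspace is two-dimensional and entirely null. The only point worth adding is that the residual subcase $B_{\dot{A}\dot{B}}=0$ also lands in the four-eigenvector alternative, with the pairs $m^{A}r_{1}^{\dot{B}},\ m^{A}r_{2}^{\dot{B}}$ tangent to $\Sigma$ and $\mu^{A}r_{1}^{\dot{B}},\ \mu^{A}r_{2}^{\dot{B}}$ tangent to $\widetilde{\Sigma}$.
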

Corollary \ref{wniosek_o_zerowych_wektorach_wlasnych_Ricci} proves, that the cases with one null eigenvector or three linearly independent null eigenvectors of the traceless Ricci tensor are not admitted in the spaces equipped with two nonexpanding cns. 

Using algebraic criteria given \cite{Chudecki_Classification}, the form of Plebański spinors and Corollary \ref{wniosek_o_zerowych_wektorach_wlasnych_Ricci} we arrive at the Table \ref{Tabela_Ricci_via_two_nonexpanding_congruence}.
\begin{table}[ht]
\begin{center}
\begin{tabular}{|c|c|c|c|}   \hline
\multicolumn{2}{|c|}{Criteria}  & Types in $\mathbf{UR}$ &  Types in $\mathbf{CR}$ \\  \hline
 \multicolumn{2}{|c|}{$b > 0$}   & $ ^{[\textrm{D}_{r}] \otimes [\textrm{D}_{c}]} [2Z-2\bar{Z}]^{4}_{(11)}$ & $[2N_{1}-2N]_{2}$ \\ \hline
 \multicolumn{2}{|c|}{$b < 0$}   & $ ^{[\textrm{D}_{c}] \otimes [\textrm{D}_{c}]} [2R_{1}^{s}-2R_{2}^{t}]^{4}_{(11)}$ & $[2N_{1}-2N]_{2}$ \\ 
 \multicolumn{2}{|c|}{  }  & $ ^{[\textrm{D}_{r}] \otimes [\textrm{D}_{r}]} [2R_{1}^{nst}-2R_{2}^{nst}]^{4}_{(11)}$ & \\ \hline
  $b = 0$ &  $B_{\dot{A}\dot{B}} \ne 0$ & $ ^{[-] \otimes [\textrm{N}_{r}]} [4R^{n}]^{2}_{(2)}$ & $^{(2)} [4N]^{a}_{2}$ \\ \cline{2-4} 
   &  $B_{\dot{A}\dot{B}} = 0$  & $ ^{[-] \otimes [-]} [4R^{nst}]^{4}_{(1)}$ & $[4N]_{1}$ \\ \hline
\end{tabular}
\caption{Possible types of the traceless Ricci tensor in para-Hermite spaces equipped with two nonexpanding congruence of the SD null strings.}
\label{Tabela_Ricci_via_two_nonexpanding_congruence}
\end{center}
\end{table}

\section{Spaces equipped with three or four congruences of SD null strings}
\setcounter{equation}{0}
\label{sekcja_o_dodatkowych_kongruencjach}

\subsection{General analysis}
\label{sekcja_o_dodatkowych_kongruencjach_general}

In this section we investigate the spaces which admit even richer structure then the para-Hermite spaces, i.e. we consider the spaces admitting three or four distinct congruences of SD null strings. As a starting point we take (complex or real) para-Hermite space equipped with two distinct congruences $\Sigma$ and $\widetilde{\Sigma}$ (generated by the spinors $m_{A}$ and $\mu_{A}$, respectively, such that $\mu^{A} m_{A}=1$). Consider now another cns, say  $\widehat{\Sigma}$-congruence, $\widehat{\Sigma} = s_{A}s_{B} S^{AB}$. Let $\widehat{\Sigma}$-congruence be complementary to the congruences $\Sigma$ and $\widetilde{\Sigma}$: $\Sigma \wedge \widehat{\Sigma} \ne 0$, $\widetilde{\Sigma} \wedge \widehat{\Sigma} \ne 0$. Decomposing the spinor $s^{A}$ according to the formula
\begin{equation}
\label{spinor_zA}
s^{A} = m m^{A} + \mu \mu^{A} = \mu \, (\xi m^{A} + \mu^{A}) \ , \ \ \ \xi := \frac{m}{\mu}
\end{equation}
we find that $m \ne 0$ and $\mu \ne 0$, otherwise the complementary conditions are not satisfied. As a consequence of existence of the $\widehat{\Sigma}$-congruence we find
\begin{equation}
\label{third_congruence_equations}
s^{A}s^{B} \nabla_{A\dot{B}} s_{B}=
s^{A} \left( \nabla_{A\dot{B}} \ln \xi + 2 Z_{A\dot{B}} - m_{A} \widetilde{M}_{\dot{B}} + \mu_{A} M_{\dot{B}} \right) = 0
\end{equation}
Acting on (\ref{third_congruence_equations}) with $s_{M} \nabla^{M\dot{B}}$ we find the integrability conditions of the equations (\ref{third_congruence_equations}) 
\begin{equation}
\label{warunki_calkowalnosci_trzeciejwsteggi}
2C^{(2)} \, \xi^2 - 3 C^{(3)} \, \xi + 2C^{(4)} = 0 
\end{equation}
According to Theorem \ref{twierdzenie_o_spinorze_generujacym_wstege_i_spinorze_Penrosa} the spinor which generates the cns must be proportional to a Penrose spinor. For the types $[\textrm{D}]$ and $[\textrm{III}]$ (in $\mathbf{CR}$) or types $[\textrm{D}_{r}]$ and $[\textrm{III}_{r}]$ (in $\mathbf{UR}$) we find, that Eq. (\ref{warunki_calkowalnosci_trzeciejwsteggi}) does not have any nonzero solutions (compare Table \ref{Tabela_Petrov_Penrose_types}). For all configurations of the $C^{(i)}$ coefficients which give the space of the type $[\textrm{II}]$ (in $\mathbf{CR}$) or $[\textrm{II}_{r}]$ (in $\mathbf{UR}$) we find that the equation (\ref{warunki_calkowalnosci_trzeciejwsteggi}) has exactly one nonzero solution. Indeed
\begin{eqnarray}
\label{solutions_for_x_typ_2}
&& \textrm{if } C^{(2)}=0 \textrm{ then } \xi=\frac{2C^{(4)}}{3C^{(3)}} \textrm{ and } s^{A} = \frac{\mu}{3 C^{(3)}} p^{A}
\\ \nonumber
&& \textrm{if } C^{(4)}=0 \textrm{ then } \xi=\frac{3C^{(3)}}{2C^{(2)}} \textrm{ and } s^{A} = \frac{\mu}{2 C^{(2)}} p^{A}
\\ \nonumber
&& \textrm{if } \delta =0 \textrm{ then } \xi=\frac{3C^{(3)}}{4C^{(2)}} \textrm{ and } s^{A}= \mu p^{A}
\end{eqnarray}
where $p^{A}$ is the third Penrose spinor (compare types $[\textrm{II}]$ and $[\textrm{II}_{r}]$ in the Table \ref{Tabela_Petrov_Penrose_types}). 

However, if $\delta \ne 0$ (in $\mathbf{CR}$) or $\delta > 0$ (in $\mathbf{UR}$) the equation (\ref{warunki_calkowalnosci_trzeciejwsteggi}) has two different nonzero solutions. One finds, that 
\begin{equation}
\label{solutions_for_x_typ_1}
\xi_{\pm}=\frac{1}{4C^{(2)}} \Big( 3C^{(3)} \pm \sqrt{ \delta } \Big)
\end{equation}
what implies $s_{A}=\mu p^{+}_{A}$ or $s_{A}=\mu p^{-}_{A}$ where spinors $p^{\pm}_{A}$ are the third and fourth Penrose spinor (compare types $[\textrm{I}]$ and $[\textrm{I}_{r}]$ in Table \ref{Tabela_Petrov_Penrose_types}). 

Summing up: the ratio $\xi$ can be found as a solution of the quadratic equation (\ref{warunki_calkowalnosci_trzeciejwsteggi}) and it depends on the curvature coefficients of the SD Weyl spinor $C^{(2)}$, $C^{(3)}$ and $C^{(4)}$. The space of the types $[\textrm{II}] \otimes [\textrm{any}]$ (in $\mathbf{CR}$) or $[\textrm{II}_{r}] \otimes [\textrm{any}]$ (in $\mathbf{UR}$) admits three distinct congruences of SD null strings, if
\begin{equation}
\label{third_congruence_equations_ostateczna_postac}
 (\xi m^{A} +  \mu^{A}) \left( \nabla_{A\dot{B}} \ln \xi + 2 Z_{A\dot{B}} - m_{A} \widetilde{M}_{\dot{B}} + \mu_{A} M_{\dot{B}} \right) = 0
\end{equation}
where $\xi$ is given by (\ref{solutions_for_x_typ_2}). However, for the spaces of the types $[\textrm{I}] \otimes [\textrm{any}]$ (in $\mathbf{CR}$) and $[\textrm{I}_{r}] \otimes [\textrm{any}]$ (in $\mathbf{UR}$) there are two solutions $\xi_{\pm}$. In this case we arrive at the equations
\begin{subequations}
\begin{eqnarray}
\label{cztery_kongruencje_1}
&&(\xi_{+} m^{A} +  \mu^{A}) \left( \nabla_{A\dot{B}} \ln \xi_{+} + 2 Z_{A\dot{B}} - m_{A} \widetilde{M}_{\dot{B}} + \mu_{A} M_{\dot{B}} \right) = 0
\\
\label{cztery_kongruencje_2}
&&(\xi_{-} m^{A} +  \mu^{A}) \left( \nabla_{A\dot{B}} \ln \xi_{-} + 2 Z_{A\dot{B}} - m_{A} \widetilde{M}_{\dot{B}} + \mu_{A} M_{\dot{B}} \right) = 0
\end{eqnarray}
\end{subequations}
If (\ref{cztery_kongruencje_1}) are satisfied and (\ref{cztery_kongruencje_2}) are not (or vice versa), then the space of the types $[\textrm{I}] \otimes [\textrm{any}]$ (in $\mathbf{CR}$) or $[\textrm{I}_{r}] \otimes [\textrm{any}]$ (in $\mathbf{UR}$) admits three distinct congruences of SD null strings. But if both (\ref{cztery_kongruencje_1}) and (\ref{cztery_kongruencje_2}) are satisfied, there are four distinct congruences of SD null strings.

[\textbf{Remark}. Equation (\ref{warunki_calkowalnosci_trzeciejwsteggi}) has a solution for arbitrary $\xi$ only if $C^{(i)}=0, i=2,3,4$. It means that arbitrary spinor $s^{A}$ generates the cns only if the spaces are left conformally flat. Such spaces admit infinitely many cns.]

We gather the results of this section in two Theorems.

\begin{Twierdzenie} 
\label{twierdzenie_o_typie_IxI}
If a complex (real) space with $C_{ABCD} \ne 0$ admits four distinct congruences of SD null strings, then
\begin{eqnarray}
\nonumber
&& (i) \ \ \  \textrm{the space is of the type } [\textrm{I}] \otimes [\textrm{any}] \ ([\textrm{I}_{r}] \otimes [\textrm{any}])
\\ \nonumber
&& (ii) \ \ \  \textrm{all congruences are necessarily expanding}
\end{eqnarray}
\end{Twierdzenie}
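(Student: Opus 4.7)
The plan is to deduce both parts from the earlier structural results: Theorem \ref{twierdzenie_o_spinorze_generujacym_wstege_i_spinorze_Penrosa} (any spinor generating a cns is a Penrose spinor) and Theorem \ref{twierdzenie_o_spinorze_generujacym_wstege_nieekspandujaca_i_spinorze_Penrosa} (any spinor generating a \emph{nonexpanding} cns is a \emph{multiple} Penrose spinor). First I would label the four cns by spinors $m_{A}^{(1)},\ldots,m_{A}^{(4)}$. Geometric distinctness $\Sigma_{i}\wedge \Sigma_{j}\ne 0$ for $i\ne j$ translates, through the identity $\Sigma \wedge \widetilde{\Sigma} = 2(m^{A}\mu_{A})^{2}\,vol$ used in the opening of Section \ref{Sekcja_dwie_komplementarne_koonggruencje_strun}, into the algebraic statement that the four projective directions $[m_{A}^{(i)}]$ are pairwise non-proportional.

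For part (i), the key observation is that since $C_{ABCD}$ is totally symmetric and nonzero by hypothesis, the binary quartic $C_{ABCD}\,z^{A}z^{B}z^{C}z^{D}$ factorises as a symmetrised product of four linear forms in the components of $z^{A}$, counted with multiplicity. Four pairwise non-proportional Penrose directions correspond to four distinct simple roots of this quartic, which forces $C_{ABCD}=\alpha_{(A}\beta_{B}\gamma_{C}\delta_{D)}$ with $\alpha_{A},\beta_{A},\gamma_{A},\delta_{A}$ pairwise non-proportional. According to Table \ref{Tabela_Petrov_Penrose_types} this is precisely the type $[\textrm{I}]$ in $\mathbf{CR}$. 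In $\mathbf{UR}$ the four Penrose directions must in addition be real, which excludes $[\textrm{I}_{rc}]$ (only two real Penrose spinors) and $[\textrm{I}_{c}]$ (no real Penrose spinors), leaving only $[\textrm{I}_{r}]$.

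For part (ii), I would argue by contradiction. Suppose that one of the four cns, say the $i$-th, is nonexpanding. Then by Theorem \ref{twierdzenie_o_spinorze_generujacym_wstege_nieekspandujaca_i_spinorze_Penrosa} its generating spinor $m_{A}^{(i)}$ is a multiple Penrose spinor, that is, the corresponding root of the quartic has multiplicity at least two. But in type $[\textrm{I}]$ (or $[\textrm{I}_{r}]$) all four Penrose roots are simple, which is a contradiction. Hence each of the four cns must be expanding.

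The only nontrivial ingredient in this plan is the translation of the wedge-product distinctness condition into projective non-proportionality of the generating spinors; once that bookkeeping is in place, both (i) and (ii) follow from the elementary fact that a binary quartic has at most four distinct roots, applied to the two cited theorems. I do not expect any serious obstacle beyond this.
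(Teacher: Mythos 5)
Your proposal is correct and follows essentially the same route as the paper: four pairwise non-proportional generating spinors are all Penrose spinors by Theorem \ref{twierdzenie_o_spinorze_generujacym_wstege_i_spinorze_Penrosa}, forcing type $[\textrm{I}]$, and a nonexpanding congruence would yield a multiple Penrose spinor by Theorem \ref{twierdzenie_o_spinorze_generujacym_wstege_nieekspandujaca_i_spinorze_Penrosa}, contradicting $(i)$. Your explicit factorisation of the binary quartic and the remark excluding $[\textrm{I}_{rc}]$ and $[\textrm{I}_{c}]$ in $\mathbf{UR}$ merely spell out details the paper leaves implicit.
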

\begin{proof} 
Spinors which generate distinct cns are mutually linearly independent. The existence of four distinct cns implies the existence of four mutually linearly independent spinors and all of them are Penrose spinors (compare Theorem \ref{twierdzenie_o_spinorze_generujacym_wstege_i_spinorze_Penrosa}). Consequently the space must be of the type $[\textrm{I}] \otimes [\textrm{any}]$ ($[\textrm{I}_{r}] \otimes [\textrm{any}]$). It proves $(i)$. According to Theorem \ref{twierdzenie_o_spinorze_generujacym_wstege_nieekspandujaca_i_spinorze_Penrosa} the existence of nonexpanding cns implies that the space is algebraically degenerated what contradicts $(i)$. It proves $(ii)$.
\end{proof}
\begin{Twierdzenie} 
\label{Twierdzenie_trzy_struny_jedna_niie}
If a complex (real) space with $C_{ABCD} \ne 0$ admits three distinct congruences of SD null strings and one of them is nonexpanding, then
\begin{eqnarray}
\nonumber
&& (i) \ \ \  \textrm{the space is of the type } [\textrm{II}] \otimes [\textrm{any}] \ ([\textrm{II}_{r}] \otimes [\textrm{any}])
\\ \nonumber
&& (ii) \ \ \  \textrm{curvature scalar R is necessarily nonzero } R \ne 0
\end{eqnarray}
\end{Twierdzenie}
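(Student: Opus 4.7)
The plan is to leverage the Penrose-spinor characterization of cns generators (Theorems \ref{twierdzenie_o_spinorze_generujacym_wstege_i_spinorze_Penrosa} and \ref{twierdzenie_o_spinorze_generujacym_wstege_nieekspandujaca_i_spinorze_Penrosa}) to pin down the algebraic type of $C_{ABCD}$ in (i), and then to read off $R$ from the integrability identity \eqref{dwie_complementarne_struny_pierwsza} in (ii).

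For part (i), I would translate the geometric hypotheses into spinor-algebraic terms. Three distinct cns give three pairwise linearly independent Penrose spinors of $C_{ABCD}$, and the nonexpansion of one of them upgrades its generator to a multiple Penrose spinor. Scanning the Petrov--Penrose list (Table \ref{Tabela_Petrov_Penrose_types}) then excludes every type but $[\textrm{II}]$ ($[\textrm{II}_{r}]$): $[-]$ contradicts $C_{ABCD}\ne 0$; types $[\textrm{D}]$, $[\textrm{III}]$ and $[\textrm{N}]$ each admit at most two distinct Penrose spinors, too few to realise three distinct cns; type $[\textrm{I}]$ ($[\textrm{I}_{r}]$, $[\textrm{I}_{rc}]$) has four distinct Penrose spinors but all of them simple, so by Theorem \ref{twierdzenie_o_spinorze_generujacym_wstege_nieekspandujaca_i_spinorze_Penrosa} cannot host the nonexpanding cns. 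Only the multiplicity profile $1+1+2$ of type $[\textrm{II}]$ is compatible with the hypotheses, which gives (i).

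For part (ii), I would use the freedom to relabel the three cns so that the nonexpanding one is $\Sigma$ (with $m_A$ its generator) and some second cns is $\widetilde{\Sigma}$ (with $\mu_A$ its generator, normalised by $m_A\mu^A=1$). Setting $M_{\dot{B}}=0$ in the integrability identity \eqref{dwie_complementarne_struny_pierwsza} kills its right-hand side, and linear independence of $m_A$ and $\mu_A$ on the left then yields $C^{(2)}=0$ and $R=6C^{(3)}$. By part (i) the space is of type $[\textrm{II}]$, and in the subcase $C^{(2)}=0$ (first row of the $[\textrm{II}]$-block of Table \ref{Tabela_Petrov_Penrose_types}) the type forces $C^{(3)}\ne 0$, since otherwise $C_{ABCD}$ would degenerate to type $[\textrm{III}]$ or $[-]$. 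Hence $R=6C^{(3)}\ne 0$.

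The only delicate point --- and thus the main obstacle --- is the relabeling step in (ii): the formalism of Section \ref{Sekcja_dwie_komplementarne_koonggruencje_strun} privileges two ``primary'' cns $\Sigma$ and $\widetilde{\Sigma}$, so when the nonexpanding cns coincides with the ``extra'' one $\widehat{\Sigma}$ (as happens, for instance, in the $\delta=0$ subcase of Table \ref{Tabela_Petrov_Penrose_types}), one must first swap $\widehat{\Sigma}\mapsto\Sigma$ and promote another cns to $\widetilde{\Sigma}$ before invoking \eqref{dwie_complementarne_struny_pierwsza}. Once this bookkeeping is done, the integrability identity and Theorem \ref{twierdzenie_o_spinorze_generujacym_wstege_nieekspandujaca_i_spinorze_Penrosa} close the argument without any further calculation.
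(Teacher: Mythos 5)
Your proposal is correct and takes essentially the same route as the paper: part (i) reproduces the paper's counting of distinct Penrose directions (three pairwise non-proportional Penrose spinors, one of them multiple, which only the $1+1+2$ multiplicity profile of type $[\textrm{II}]$ can accommodate). For part (ii) the paper merely states that the claim ``follows immediately'' from Theorem \ref{twierdzenie_o_spinorze_generujacym_wstege_nieekspandujaca_i_spinorze_Penrosa}, whereas you supply the computation it implicitly relies on --- setting $M_{\dot{B}}=0$ in (\ref{dwie_complementarne_struny_pierwsza}) to obtain $C^{(2)}=0$ and $R=6C^{(3)}$, then reading off $C^{(3)}\ne0$ from the type-$[\textrm{II}]$ row of Table \ref{Tabela_Petrov_Penrose_types} --- and the relabeling concern you raise is harmless, since the formulas of Section \ref{Sekcja_dwie_komplementarne_koonggruencje_strun} apply to any pair of the three congruences.
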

\begin{proof}
Nonexpanding cns is generated by the multiple Penrose spinor (compare Theorem \ref{twierdzenie_o_spinorze_generujacym_wstege_nieekspandujaca_i_spinorze_Penrosa}). Moreover, all spinors which generate distinct cns are mutually linearly independent and they are Penrose spinors (compare Theorem \ref{twierdzenie_o_spinorze_generujacym_wstege_i_spinorze_Penrosa}). The existence of three distinct cns (one of them being nonexpanding) implies the existence of three Penrose spinors (one of them is double). Consequently the space must be of the type $[\textrm{II}] \otimes [\textrm{any}]$ ($[\textrm{II}_{r}] \otimes [\textrm{any}]$). It proves $(i)$. $(ii)$ follows immediately from Theorem \ref{twierdzenie_o_spinorze_generujacym_wstege_nieekspandujaca_i_spinorze_Penrosa}. 
\end{proof}

Possible Petrov-Penrose types of the spaces equipped with three complementary cns are listed in the Table \ref{Tabela_Petrov_Penrose_types_via_properties_threee_congruencces}.
\begin{table}[ht]
\begin{center}
\begin{tabular}{|c|c|c|}   \hline
\multicolumn{3}{|c|}{Types in $\mathbf{CR}$} \\ \hline
Curvature scalar & $M_{\dot{A}} \ne 0$, $\widetilde{M}_{\dot{A}} \ne 0$, $\widehat{M}_{\dot{A}} \ne 0$ &  $M_{\dot{A}} = 0$, $\widetilde{M}_{\dot{A}} \ne 0$, $\widehat{M}_{\dot{A}} \ne 0$  \\  \hline
$R\ne0$ & $[\textrm{I}]$, $[\textrm{II}]$, $[-]$ &  $[\textrm{II}]$ \\ \cline{1-1} \cline{3-3}
$R=0$ &  &  $[-]$ \\ \hline
\multicolumn{3}{|c|}{Types in $\mathbf{UR}$} \\ \hline
Curvature scalar & $M_{\dot{A}} \ne 0$, $\widetilde{M}_{\dot{A}} \ne 0$, $\widehat{M}_{\dot{A}} \ne 0$ &  $M_{\dot{A}} = 0$, $\widetilde{M}_{\dot{A}} \ne 0$, $\widehat{M}_{\dot{A}} \ne 0$  \\  \hline
$R\ne0$ & $[\textrm{I}_{r}]$, $[\textrm{II}_{r}]$, $[-]$ &  $[\textrm{II}_{r}]$ \\ \cline{1-1} \cline{3-3}
$R=0$ &  &  $[-]$ \\ \hline
\end{tabular}
\caption{Petrov-Penrose types of SD Weyl spinor in spaces equipped with three distinct congruences of SD null strings via properties of these congruences.}
\label{Tabela_Petrov_Penrose_types_via_properties_threee_congruencces}
\end{center}
\end{table}

\subsection{The metric of the spaces equipped with three distinct congruences of SD null strings}
\label{subsekcja_o_wielu_kongrueencjach}

The general metric of the space equipped with three distinct cns has been found in \cite{Rozga_Robinson}. The authors considered a para-Hermite space equipped with the congruences $\Sigma$ and $\widetilde{\Sigma}$, given by the Pfaff systems $dz^{A}=0$ and $dz^{\dot{A}}=0$, respectively. It is well known \cite{Flaherty} that the metric of such a space can be brought to the form $ds^{2} = 2f_{A\dot{B}} \, dz^{A} dz^{\dot{B}}$ where $f_{A\dot{B}}=f_{A\dot{B}} (z^{M}, z^{\dot{M}})$ are arbitrary complex holomorphic (real smooth) functions. The third congruence, $\widehat{\Sigma}$ implies the existence of the functions $f^{\dot{A}}$ such that this congruence is defined by the Pfaff system $df^{\dot{A}}=0$. It means that the alternate form of the metric reads $ds^2 = 2 \omega_{A\dot{B}} \, dz^{A} df^{\dot{B}}$, where $\omega_{A\dot{B}}=\omega_{A\dot{B}} (z^{M}, f^{\dot{M}})$. Of course
\begin{equation}
\label{para_hermite_metric_twoforrms}
ds^{2}=2f_{A\dot{B}} \, dz^{A} dz^{\dot{B}} = 2\omega_{A\dot{B}} \, dz^{A} df^{\dot{B}}
\end{equation}
Treating $f^{\dot{A}}$ as a functions of $(z^{A}, z^{\dot{B}})$ we arrive at the equations
\begin{subequations}
\begin{eqnarray}
\label{definicja_efa}
&& f_{A\dot{B}} = \omega_{A\dot{M}} \, \frac{\partial f^{\dot{M}}}{\partial z^{\dot{B}}}
\\
\label{rownanie_na_trzecia_konggggruencje}
&& \frac{\partial f^{\dot{B}}}{\partial z^{(M}} \, \omega_{A)\dot{B}}=0
\end{eqnarray}
\end{subequations}
Eqs. (\ref{rownanie_na_trzecia_konggggruencje}) can be easily solved
\begin{equation}
\label{solution_for_omega}
\omega_{A\dot{B}} = \Omega \, \frac{\partial f_{\dot{B}}}{\partial z^{A}}
\end{equation}
where $\Omega = \Omega (z^{A}, z^{\dot{B}})$ is arbitrary function. Putting (\ref{solution_for_omega}) into (\ref{definicja_efa}) we arrive at the metric \cite{Rozga}
\begin{equation}
\label{metryka_z_trzema_kongruencjami_SDstrun}
ds^{2} = 2 \Omega \, \frac{\partial f_{\dot{M}}}{\partial z^{A}} \frac{\partial f^{\dot{M}}}{\partial z^{\dot{B}}} \, dz^{A} dz^{\dot{B}}
\end{equation}
Generally, the metric (\ref{metryka_z_trzema_kongruencjami_SDstrun}) admits three distinct expanding cns and its SD Weyl spinor is of the types $[\textrm{I}]$ or $[\textrm{II}]$ in $\mathbf{CR}$ ($[\textrm{I}_{r}]$ or $[\textrm{II}_{r}]$ in $\mathbf{UR}$). This metric cannot be Einsteinian. The metric depends on three functions $\Omega$ and $f^{\dot{B}}$ of the four variables $(z^{A}, z^{\dot{B}})$. Note, that all congruences are complementary to each other so $\Sigma \wedge \widetilde{\Sigma} \ne 0$, $\Sigma \wedge \widehat{\Sigma} \ne 0$ and $\widetilde{\Sigma} \wedge \widehat{\Sigma} \ne 0$. It follows that 
\begin{equation}
\label{warunki_na_kongruencje}
 \frac{\partial f^{\dot{A}}}{\partial z^{B}} \frac{\partial f_{\dot{A}}}{\partial z_{B}} \ne 0 \ , \ \ \ 
 \frac{\partial f^{\dot{A}}}{\partial z^{\dot{B}}} \frac{\partial f_{\dot{A}}}{\partial z_{\dot{B}}} \ne 0 
\end{equation}

The special case with one of these congruences being nonexpanding has not been considered in \cite{Rozga_Robinson}. In this case we arrive at the following theorem:
\begin{Twierdzenie} If a complex (real) space admits three distinct congruences of SD null strings and one of them is nonexpanding, then the metric of this space can be brought to the form
\begin{equation}
\label{metryka_z_trzema_kongruencjami_SDstrun_jedna_nonexpanding}
ds^{2} = 2 \Omega \, \frac{\partial f_{\dot{M}}}{\partial z^{A}} \frac{\partial f^{\dot{M}}}{\partial z^{\dot{B}}} \, dz^{A} dz^{\dot{B}}
\end{equation}
where $\Omega=\Omega(z^{\dot{A}})$, $f^{\dot{M}}=f^{\dot{M}} (z^{A}, z^{\dot{B}})$ and $(z^{A}, z^{\dot{B}})$ are local coordinates.
\end{Twierdzenie}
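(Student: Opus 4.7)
The plan is to take the R\'o\.zga--Robinson metric (\ref{metryka_z_trzema_kongruencjami_SDstrun}), which already realizes three distinct (generically expanding) cns, and to impose the extra hypothesis that one of them is nonexpanding as an additional PDE constraint on the free functions $\Omega$ and $f^{\dot M}$. Any two distinct cns can serve as the reference para-Hermite pair (they fix the coordinates $(z^A,z^{\dot A})$), so by picking the pair that contains the nonexpanding cns together with one of the two expanding ones, and then using the manifest symmetry of (\ref{metryka_z_trzema_kongruencjami_SDstrun}) under the discrete swap $z^A\leftrightarrow z^{\dot A}$, I may assume without loss of generality that the nonexpanding cns is $\Sigma$, i.e.\ the one with Pfaff system $dz^A=0$, generated by a spinor $m_A$ of constant components.

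Next, I would read off a null tetrad from $g_{A\dot B}=\Omega\,\partial_A f_{\dot M}\partial_{\dot B}f^{\dot M}$ (e.g.\ $e^1\sim dz^2$, $e^3\sim dz^1$, with $e^2,e^4$ supplied by suitable combinations of $g_{A\dot B}dz^{\dot B}$), assemble the spinorial 1-forms from (\ref{1_formy_spinorowe}), and then compute the spinorial connection coefficients $\mathbf\Gamma_{AB\,M\dot N}$ from the first Cartan equation (\ref{pierwsze_rownania_struktury}) combined with (\ref{zwiazek_koneksji_spinorowej_i_klasycznej})--(\ref{rozklad_form_koneksji}). The expansion $M_{\dot N}$ of $\Sigma$ is then extracted from (\ref{SD_null_strings_2}): because $m_A$ has constant components, $M_{\dot N}$ turns out to be a first-order differential expression in $\ln\Omega$ and in $f^{\dot M}$ involving only the $z^A$-derivatives, while the $z^{\dot A}$-derivatives contribute instead to the Sommers vector $Z_{A\dot B}$.

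Setting $M_{\dot N}=0$ then yields a PDE which, after use of the two-dimensional identities on the dotted $\in$-tensors (which convert the bilinears $\partial_A f_{\dot M}\partial_B f^{\dot M}$ into an antisymmetric object proportional to $\in_{AB}$), reduces to the statement that $\partial_A\Omega$ is pure gauge. The residual freedom of (\ref{metryka_z_trzema_kongruencjami_SDstrun}) under $z^A\mapsto\tilde z^A(z^B)$, $z^{\dot A}\mapsto\tilde z^{\dot A}(z^{\dot B})$, $f^{\dot M}\mapsto\tilde f^{\dot M}(f^{\dot N})$ together with the compensating rescaling of $\Omega$ is precisely enough to absorb any remaining $z^A$-dependence of $\Omega$, leading to $\Omega=\Omega(z^{\dot A})$, as claimed. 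The main obstacle is the bookkeeping in the connection computation: the 1-forms of (\ref{metryka_z_trzema_kongruencjami_SDstrun}) mix derivatives of $\Omega$, of $f^{\dot M}$ and the second derivatives $\partial_A\partial_{\dot B}f^{\dot M}$, so that separating the $M_{\dot N}$-contribution from the Sommers part $Z_{A\dot B}$ requires careful spinor algebra. A secondary consistency check is to verify that after the reduction $\Omega=\Omega(z^{\dot A})$ the conditions (\ref{warunki_na_kongruencje}) are preserved and the cns $\widetilde\Sigma,\widehat\Sigma$ remain expanding, so that the hypothesis of exactly one nonexpanding cns out of three is self-consistent.
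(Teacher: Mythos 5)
There is a genuine gap, and it sits exactly at the step you wave through. Your ``without loss of generality'' puts the nonexpanding congruence in the wrong slot. The expansion of the congruence $\Sigma$ given by $dz^{A}=0$ is \emph{not} built from $z^{A}$-derivatives of $\Omega$ and $f^{\dot M}$: its leaves are parametrized by $z^{\dot A}$, and a short Koszul computation for $ds^{2}=2f_{A\dot B}\,dz^{A}dz^{\dot B}$ shows that the obstruction to parallel propagation of $\mathrm{span}\{\partial/\partial z^{\dot 1},\partial/\partial z^{\dot 2}\}$ is $\epsilon^{\dot A\dot B}\partial_{\dot A}f_{C\dot B}$, i.e.\ it involves the \emph{dotted} derivatives. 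Substituting $f_{A\dot B}=\Omega\,\partial_{A}f_{\dot M}\,\partial_{\dot B}f^{\dot M}$ from (\ref{metryka_z_trzema_kongruencjami_SDstrun}), the condition $M_{\dot N}=0$ for $\Sigma$ reads schematically
\begin{equation}
(\partial^{\dot B}\Omega)\,\partial_{A}f_{\dot M}\,\partial_{\dot B}f^{\dot M}
+\Omega\,(\partial^{\dot B}\partial_{A}f_{\dot M})\,\partial_{\dot B}f^{\dot M}=0 ,
\end{equation}
in which the second-derivative term does \emph{not} cancel and the $\Omega$-term involves $\partial_{\dot B}\Omega$, not $\partial_{A}\Omega$. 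So your claimed reduction to ``$\partial_{A}\Omega$ is pure gauge'' fails, and the conclusion $\Omega=\Omega(z^{\dot A})$ does not follow along this route. The purported ``manifest symmetry under $z^{A}\leftrightarrow z^{\dot A}$'' of (\ref{metryka_z_trzema_kongruencjami_SDstrun}) does not exist: in that normal form $\Sigma$ plays a distinguished role, and the only interchangeable pair is $\widetilde\Sigma\leftrightarrow\widehat\Sigma$ (swap of the coordinates $z^{\dot A}$ and $f^{\dot A}$).

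The correct use of the labelling freedom --- and what the paper does --- is the opposite of yours: choose the two \emph{expanding} congruences as the base para-Hermite pair, so that the nonexpanding one is the third congruence $\widehat\Sigma$ given by $df^{\dot A}=0$. Writing the metric as $2\omega_{A\dot B}\,dz^{A}df^{\dot B}$ with $\omega_{A\dot B}=\Omega\,\partial f_{\dot B}/\partial z^{A}$ from (\ref{solution_for_omega}), the nonexpansion condition becomes $\partial\omega_{M\dot N}/\partial z_{M}=0$, i.e.\ an \emph{undotted} antisymmetrized derivative; there the term $\Omega\,\epsilon^{MK}\partial_{K}\partial_{M}f_{\dot N}$ vanishes identically by symmetry of second partials, leaving precisely (\ref{war_brak_konggg}), $\partial^{A}\Omega\,\partial_{A}f_{\dot B}=0$, and the nondegeneracy condition (\ref{warunki_na_kongruencje}) then forces $\partial\Omega/\partial z^{A}=0$ with no residual gauge fixing needed. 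If you insist on making $\Sigma$ the nonexpanding one, you are led to a different (and messier) constraint coupling $\partial_{\dot A}\Omega$ to the Hessian of $f^{\dot M}$, and extra coordinate changes would be required to reach the stated normal form; nothing in your outline supplies them.
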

\begin{proof} The space admits three distinct cns, so the metric has the form (\ref{metryka_z_trzema_kongruencjami_SDstrun}). Detailed calculations prove, that the expansion of the $\widehat{\Sigma}$-congruence is proportional to the $\frac{\partial \omega_{M\dot{N}}}{\partial z_{M}}$. Consider now $\widehat{\Sigma}$-congruence as a nonexpanding one. We have
\begin{equation}
\label{warunek_na_brak_ekspansji_Sigmahat_kongruencji}
\frac{\partial \omega_{M\dot{N}}}{\partial z_{M}}=0
\end{equation}
Feeding (\ref{warunek_na_brak_ekspansji_Sigmahat_kongruencji}) with (\ref{solution_for_omega}) one obtains
\begin{equation}
\label{war_brak_konggg}
\frac{\partial \Omega}{\partial z_{A}} \, \frac{\partial f_{\dot{B}}}{\partial z^{A}}=0
\end{equation}
Contracting (\ref{war_brak_konggg}) with $\frac{\partial f^{\dot{B}}}{\partial z^{M}}$ we find
\begin{equation}
\frac{\partial \Omega}{\partial z^{M}} \, \frac{\partial f^{\dot{A}}}{\partial z^{B}} \frac{\partial f_{\dot{A}}}{\partial z_{B}} =0
\end{equation}
Finally, because of the (\ref{warunki_na_kongruencje}) one arrives at the formula
\begin{equation}
\frac{\partial \Omega}{\partial z^{M}} = 0 \ \Longleftrightarrow \ \Omega=\Omega (z^{\dot{N}})
\end{equation}
\end{proof}

\subsection{The metric of the spaces equipped with four distinct congruences of SD null strings}
\label{subsekcja_o_czterech_kongrueencjach}

Four distinct cns are admitted only by the spaces with algebraically general SD Weyl spinor. We assume the existence of congruences $\Sigma$, $\widetilde{\Sigma}$, $\widehat{\Sigma}$ and $\widehat{\widehat{\Sigma}}$ given by the Pfaff systems $dz^{A}=0$, $dz^{\dot{A}}=0$, $df^{A}=0$ and $df^{\dot{A}}=0$, respectively. All those congruences are complementary what implies 
\begin{equation}
\label{warunki_na_komplementarnosc_czterech_strun}
\Sigma \wedge \widetilde{\Sigma} \ne 0 \ , \ \ \ \Sigma \wedge \widehat{\Sigma} \ne 0 \ , \ \ \ 
\Sigma \wedge \widehat{\widehat{\Sigma}} \ne 0 \ , \ \ \  \widetilde{\Sigma} \wedge \widehat{\Sigma} \ne 0 \ , \ \ \ 
\widetilde{\Sigma} \wedge \widehat{\widehat{\Sigma}} \ne 0 \ , \ \ \ 
\widehat{\Sigma} \wedge \widehat{\widehat{\Sigma}} \ne 0 
\end{equation}
Writing $f^{A} = f^{A} (z^{B}, z^{\dot{C}})$ and $f^{\dot{A}} = f^{\dot{A}} (z^{B}, z^{\dot{C}})$, from (\ref{warunki_na_komplementarnosc_czterech_strun}) we find that
\begin{eqnarray}
&& \alpha_{1} := \frac{\partial f^{M}}{\partial z^{N}} \frac{\partial f_{M}}{\partial z_{N}} \ne 0 \ , \ \ \ 
 \alpha_{2} := \frac{\partial f^{M}}{\partial z^{\dot{N}}} \frac{\partial f_{M}}{\partial z_{\dot{N}}} \ne 0 \ , \ \ \
 \alpha_{3} := \frac{\partial f^{\dot{M}}}{\partial z^{N}} \frac{\partial f_{\dot{M}}}{\partial z_{N}} \ne 0
 \\ 
 && \alpha_{4} := \frac{\partial f^{\dot{M}}}{\partial z^{\dot{N}}} \frac{\partial f_{\dot{M}}}{\partial z_{\dot{N}}} \ne 0 \ , \ \ \ 
 \alpha_{1} \alpha_{4} + \alpha_{2} \alpha_{3} - 4 \, \frac{\partial f^{A}}{\partial z^{M}} \frac{\partial f_{A}}{\partial z^{\dot{N}}} \frac{\partial f^{\dot{B}}}{\partial z_{M}} \frac{\partial f_{\dot{B}}}{\partial z_{\dot{N}}} \ne 0
\end{eqnarray}
Then we arrive at the following 
\begin{Twierdzenie} If a complex (real) space admits four distinct congruences of SD null strings, then the metric of this space can be brought to the form
\begin{equation}
\label{metryka_przestrzeni_z_czterema_kongruencjami_strun}
ds^{2} = 2\Omega \, \left( \alpha_{3} \, \frac{\partial f^{M}}{\partial z^{A}} \frac{\partial f_{M}}{\partial z^{\dot{B}}} - \alpha_{1} \, \frac{\partial f^{\dot{M}}}{\partial z^{A}} \frac{\partial f_{\dot{M}}}{\partial z^{\dot{B}}} \right) dz^{A} dz^{\dot{B}}
\end{equation}
where $(z^{A}, z^{\dot{B}})$ are local coordinates, $\Omega=\Omega(z^{A},z^{\dot{B}})$ and the functions $f^{M}=f^{M} (z^{A}, z^{\dot{B}})$ and $f^{\dot{M}}=f^{\dot{M}} (z^{A}, z^{\dot{B}})$ satisfy the set of three equations
\begin{equation}
\label{rownania_jakie_zostaly_dla_metryki_cztery_kongr}
\frac{\partial f_{A}}{\partial z^{M}} \frac{\partial f_{\dot{B}}}{\partial z_{M}} 
\frac{\partial f^{A}}{\partial z^{(\dot{R}}} \frac{\partial f^{\dot{B}}}{\partial z^{\dot{S})}} = 0
\end{equation}
\end{Twierdzenie}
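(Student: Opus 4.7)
The plan is to extend the three-congruence derivation of subsection~\ref{subsekcja_o_wielu_kongrueencjach} by treating both extra congruences $\widehat{\Sigma}$ and $\widehat{\widehat{\Sigma}}$ on equal footing. First I would fix the para-Hermite coordinates $(z^{A}, z^{\dot{B}})$ coming from $\Sigma$ and $\widetilde{\Sigma}$, so that the metric takes the form $ds^{2} = 2 f_{A\dot{B}}\,dz^{A}dz^{\dot{B}}$. Direct application of the argument leading to (\ref{metryka_z_trzema_kongruencjami_SDstrun}), with $\widehat{\widehat{\Sigma}}: df^{\dot{A}} = 0$ playing the role of the third congruence, immediately yields
\begin{equation*}
f_{A\dot{B}} = \Omega_{1}\, Q_{A\dot{B}}, \qquad Q_{A\dot{B}} := \frac{\partial f^{\dot{M}}}{\partial z^{A}} \frac{\partial f_{\dot{M}}}{\partial z^{\dot{B}}}.
\end{equation*}

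Next I would repeat the analysis with $\widehat{\Sigma}: df^{A} = 0$ taken as the third congruence. The roles of dotted and undotted indices are interchanged: one writes $ds^{2} = 2\hat{\omega}_{A\dot{B}}\, df^{A}\, dz^{\dot{B}}$, demands that the resulting $dz^{\dot{M}} dz^{\dot{N}}$ symmetric cross terms vanish, and arrives at the companion of equation (\ref{rownanie_na_trzecia_konggggruencje}) with the two index types swapped. Its one-parameter solution $\hat{\omega}_{A\dot{B}} = \Omega_{2}\,\partial f_{A}/\partial z^{\dot{B}}$, substituted back, gives
\begin{equation*}
f_{A\dot{B}} = \Omega_{2}\, P_{A\dot{B}}, \qquad P_{A\dot{B}} := \frac{\partial f^{M}}{\partial z^{A}} \frac{\partial f_{M}}{\partial z^{\dot{B}}}.
\end{equation*}

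Now I would equate the two expressions, which forces $P_{A\dot{B}}$ and $Q_{A\dot{B}}$ to be proportional as rank-two tensors on the two-dimensional spinor spaces. This proportionality amounts to exactly three independent scalar constraints, and a direct algebraic manipulation using the spinor identity $u^{A}v_{A} = -u_{A}v^{A}$ together with the definitions of $\alpha_{1}$ and $\alpha_{3}$ rewrites it as the three symmetric-in-$(\dot{R}\dot{S})$ equations (\ref{rownania_jakie_zostaly_dla_metryki_cztery_kongr}). Once the proportionality holds, the linear combination $\alpha_{3} P_{A\dot{B}} - \alpha_{1} Q_{A\dot{B}}$ is automatically a scalar multiple of $f_{A\dot{B}}$, and that scalar can be absorbed into a single free function $\Omega(z^{A}, z^{\dot{B}})$ to produce the form (\ref{metryka_przestrzeni_z_czterema_kongruencjami_strun}).

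The main obstacle will be the third paragraph: one has to verify that the proportionality $\Omega_{1} Q_{A\dot{B}} = \Omega_{2} P_{A\dot{B}}$ collapses to precisely (\ref{rownania_jakie_zostaly_dla_metryki_cztery_kongr}) and not some weaker or stronger system, and that the combination $\alpha_{3}P - \alpha_{1}Q$ does not vanish identically. The non-vanishing is guaranteed by the complementarity inequalities listed around (\ref{warunki_na_komplementarnosc_czterech_strun}), in particular the last one, which prevents $\alpha_{1}$ and $\alpha_{3}$ from aligning with $P$ and $Q$ in a way that would kill the combination. The symmetric appearance of $f^{M}$ and $f^{\dot{M}}$ in (\ref{metryka_przestrzeni_z_czterema_kongruencjami_strun}) reflects the symmetric role played by $\widehat{\Sigma}$ and $\widehat{\widehat{\Sigma}}$ in the four-congruence hypothesis.
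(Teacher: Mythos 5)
Your argument is correct and arrives at the same metric and the same residual system, but by a genuinely different decomposition from the paper's. The paper equates the para-Hermite form $2f_{A\dot{B}}\,dz^{A}dz^{\dot{B}}$ directly with $2\omega_{A\dot{B}}\,df^{A}df^{\dot{B}}$, i.e.\ it passes at once to the coframe adapted to the two \emph{new} congruences; this yields the three tensorial conditions (\ref{cztery_kongruencje_row1})--(\ref{cztery_kongruencje_row3}), of which (\ref{cztery_kongruencje_row2}) fixes $\omega_{M\dot{N}}\propto \frac{\partial f_{M}}{\partial z^{A}}\frac{\partial f_{\dot{N}}}{\partial z_{A}}$, substitution into (\ref{cztery_kongruencje_row1}) produces the combination $\alpha_{3}P_{A\dot{B}}-\alpha_{1}Q_{A\dot{B}}$ as equal to $f_{A\dot{B}}$, and (\ref{cztery_kongruencje_row3}) fed with that $\omega$ is exactly (\ref{rownania_jakie_zostaly_dla_metryki_cztery_kongr}). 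You instead run the three-congruence argument of subsection \ref{subsekcja_o_wielu_kongrueencjach} twice, in the mixed coframes $(dz^{A},df^{\dot{B}})$ and $(df^{A},dz^{\dot{B}})$, getting $f_{A\dot{B}}\propto Q_{A\dot{B}}$ and $f_{A\dot{B}}\propto P_{A\dot{B}}$ separately, so that the whole content of the fourth congruence collapses to $P\propto Q$. The two residual systems do coincide: contracting the first pair of factors in (\ref{rownania_jakie_zostaly_dla_metryki_cztery_kongr}) shows it is precisely $P_{M(\dot{R}}Q^{M}_{\ \,\dot{S})}=0$, which is implied by $P\propto Q$ via the antisymmetry identity $u_{A}v^{A}=-u^{A}v_{A}$, and is equivalent to it when $Q$ is nondegenerate (guaranteed because $Q\propto f_{A\dot{B}}$ and the metric is nondegenerate). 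What your route buys is a more transparent form of the constraint (proportionality of the two candidate metrics) and the observation that under the constraints the metric is already $\propto Q_{A\dot{B}}$ alone; what it costs is the extra verification you flagged, namely that $\alpha_{3}P-\alpha_{1}Q$ does not vanish, which in the paper's route never arises because that combination is \emph{derived} to equal $f_{A\dot{B}}$ rather than assembled afterwards. That nonvanishing does follow from $\widehat{\Sigma}\wedge\widehat{\widehat{\Sigma}}\neq 0$, as you assert, but in a clean write-up you would need to make that implication explicit rather than gesture at it.
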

\begin{proof}
Consider the congruences $\Sigma$ and $\widetilde{\Sigma}$. Analogously to (\ref{para_hermite_metric_twoforrms}) the metric reads $ds^{2} = 2f_{A\dot{B}} \, dz^{A} dz^{\dot{B}}$. Alternatively, choosing $\widehat{\Sigma}$ and $\widehat{\widehat{\Sigma}}$-congruences as a basic ones, we have $ds^{2}=2\omega_{A\dot{B}} \, df^{A} df^{\dot{B}}$. Of course
\begin{equation}
2f_{A\dot{B}} \, dz^{A} dz^{\dot{B}} = 2\omega_{A\dot{B}} \, df^{A} df^{\dot{B}}
\end{equation}
what leads to the equations
\begin{subequations}
\begin{eqnarray}
\label{cztery_kongruencje_row1}
&& f_{A\dot{B}} = \omega_{M\dot{N}} \left( \frac{\partial f^{M}}{\partial z^{A}} \frac{\partial f^{\dot{N}}}{\partial z^{\dot{B}}} + \frac{\partial f^{M}}{\partial z^{\dot{B}}} \frac{\partial f^{\dot{N}}}{\partial z^{A}} \right)
\\
\label{cztery_kongruencje_row2}
&& \omega_{M\dot{N}} \, \frac{\partial f^{M}}{\partial z^{(A}} \frac{\partial f^{\dot{N}}}{\partial z^{B)}} = 0
\\
\label{cztery_kongruencje_row3}
&& \omega_{M\dot{N}} \, \frac{\partial f^{M}}{\partial z^{(\dot{A}}} \frac{\partial f^{\dot{N}}}{\partial z^{\dot{B})}} = 0
\end{eqnarray}
\end{subequations}
From (\ref{cztery_kongruencje_row2}) we obtain
\begin{equation}
\label{cztery_kongr_przejsciowe}
\omega_{M\dot{N}} \, \frac{\partial f^{M}}{\partial z^{A}} = \omega \, \frac{\partial f_{\dot{N}}}{\partial z^{A}}
\end{equation}
where $\omega = \omega (z^{A}, z^{\dot{B}})$ is some function. Contracting (\ref{cztery_kongr_przejsciowe}) with $\frac{\partial f_{N}}{\partial z_{A}}$ one gets
\begin{equation}
\frac{1}{2} \alpha_{1} \, \omega_{M \dot{N}} = \omega \, \frac{\partial f_{\dot{N}}}{\partial z^{A}}  \frac{\partial f_{M}}{\partial z_{A}} 
\end{equation}
Remembering, that $\alpha_{1} \ne 0$ and re-defining function $\omega$ we find the solution for $\omega_{M \dot{N}}$
\begin{equation}
\label{cztery_kongruencje_pomoc2}
 \omega_{M \dot{N}} =- 2\Omega \,  \frac{\partial f_{M}}{\partial z^{A}}   \frac{\partial f_{\dot{N}}}{\partial z_{A}} 
\end{equation}
Putting (\ref{cztery_kongruencje_pomoc2}) into (\ref{cztery_kongruencje_row1}) one finds the solution for $f_{A\dot{B}}$ and finally arrives at the metric (\ref{metryka_przestrzeni_z_czterema_kongruencjami_strun}). The last step is to feed (\ref{cztery_kongruencje_row3}) by (\ref{cztery_kongruencje_pomoc2}). The result is the system (\ref{rownania_jakie_zostaly_dla_metryki_cztery_kongr}).
\end{proof}

[\textbf{Remark.} One can first solve the Eq. (\ref{cztery_kongruencje_row3}). In this case we arrive at the equivalent form of the metric
\begin{equation}
\label{metryka_przestrzeni_z_czterema_kongruencjami_strun_equivalent}
ds^{2} = 2\Omega_{1} \, \left( \alpha_{4} \, \frac{\partial f^{M}}{\partial z^{A}} \frac{\partial f_{M}}{\partial z^{\dot{B}}} - \alpha_{2} \, \frac{\partial f^{\dot{M}}}{\partial z^{A}} \frac{\partial f_{\dot{M}}}{\partial z^{\dot{B}}} \right) dz^{A} dz^{\dot{B}}
\end{equation}
with the equations
\begin{equation}
\label{rownania_jakie_zostaly_dla_metryki_cztery_kongr_equivalent}
\frac{\partial f_{M}}{\partial z^{\dot{A}}} \frac{\partial f_{\dot{N}}}{\partial z_{\dot{A}}} 
\frac{\partial f^{M}}{\partial z^{(A}} \frac{\partial f^{\dot{N}}}{\partial z^{B)}} = 0 \ \ \ ]
\end{equation}

\section{Concluding remarks}

There are several directions of further investigations. In this paper we considered only SD congruences. The manifold $\mathcal{M}$ can be equipped with similar structures on ASD side. Intersections of these congruences constitute the sngc. Because in $\mathbf{CR}$ and $\mathbf{UR}$, SD and ASD structures are unrelated one can consider the spaces with mixed Petrov-Penrose types (like $[\textrm{D}] \otimes [\textrm{N}]$ or $[\textrm{II}_{rc}] \otimes [\textrm{III}_{r}]$) equipped with many sngc with different properties of optical scalars (twist, expansion). Note, that the types $[\textrm{I}] \otimes [\textrm{I}]$ (in $\mathbf{CR}$) and $[\textrm{I}_{r}] \otimes [\textrm{I}_{r}]$ (in $\mathbf{UR}$) admits 4 SD and 4 ASD congruences of null strings. Intersections of these congruences give 16 distinct sngc. Do the spaces having so rich structure exist? Or maybe some (unrecognized yet) conditions make such spaces impossible to exist?

There are 6 different types of SD Weyl spinor in $\mathbf{CR}$ and 10 in $\mathbf{UR}$ (see \cite{Rod_Hill_Nurowski, Chudecki_Classification}). The existence of cns allows to distinguish the special subtypes. For example, type $[\textrm{I}]$ in general do not admit any cns. More special type $[\textrm{I}]^{e}$ admits one cns (superscript $e$ means that the corresponding cns is expanding while $n$ denotes nonexpanding cns). Types $[\textrm{I}]^{ee}$, $[\textrm{I}]^{eee}$ and $[\textrm{I}]^{eeee}$ are equipped with two, three and four cns, correspondingly. All possible subtypes of $C_{ABCD}$ in $\mathbf{CR}$ and $\mathbf{UR}$ are listed on the Schemes \ref{Degeneration_scheme_of_complex_case} and \ref{Degeneration_scheme_of_real_case}. ($[-]^{e}$ means, that all cns are expanding, while $[-]^{n}$ means, that at least one of them in nonexpanding).

\begin{Scheme}[!ht]
\begin{displaymath}
\xymatrixcolsep{0.0cm}
\xymatrixrowsep{0.8cm}
\xymatrix{
\textrm{0 cns:} &  [\textrm{I}]  \ar[d] & [\textrm{II}]  \ar[d]  \ar[dr]  & & [\textrm{III}]  \ar[d] \ar[dr] &    &
[\textrm{D}]  \ar[d]  \ar[dr]  &  &  &  [\textrm{N}]  \ar[d]  \ar[dr] & & [-] \ar[ddddd] \ar[dddddr] \\ 
\textrm{1 cns:}  & [\textrm{I}]^{e} \ar[d]  &  [\textrm{II}]^{e} \ar[d] \ar[dr] & [\textrm{II}]^{n} \ar[d]  &  [\textrm{III}]^{e} \ar[d]  \ar[dr] & [\textrm{III}]^{n} \ar[d] & 
[\textrm{D}]^{e} \ar[d]  \ar[dr] & [\textrm{D}]^{n} \ar[d]  \ar[dr] &  &  [\textrm{N}]^{e}  & [\textrm{N}]^{n}  &  \\
\textrm{2 cns:}  & [\textrm{I}]^{ee}  \ar[d]  &  [\textrm{II}]^{ee} \ar[d] \ar[dr] & [\textrm{II}]^{en} \ar[d] &  [\textrm{III}]^{ee}  & [\textrm{III}]^{en}  & 
[\textrm{D}]^{ee}  & [\textrm{D}]^{en}  & [\textrm{D}]^{nn} &    &   &  \\
\textrm{3 cns:}  & [\textrm{I}]^{eee} \ar[d]  &  [\textrm{II}]^{eee}  & [\textrm{II}]^{een}  &  &  & &   &  &  &   &  \\
\textrm{4 cns:}  & [\textrm{I}]^{eeee}   &    & &  &  & &   &  &  &   &  \\
\textrm{$\infty$ cns:}  &    &    & &  &  & &   &  &  &   &  [-]^{e}  & [-]^{n} \\
}
\end{displaymath} 
\caption{Types of $C_{ABCD}$ in $\mathbf{CR}$ equipped with different numbers of congruences of the SD null strings.}
\label{Degeneration_scheme_of_complex_case}
\end{Scheme}

\begin{Scheme}[!ht]
\begin{displaymath}
\resizebox{1\textwidth}{!}{
\xymatrixcolsep{0.0cm}
\xymatrixrowsep{0.8cm}
\xymatrix{
\textrm{0 cns:} &  [\textrm{I}_{r}]  \ar[d] & [\textrm{I}_{rc}]  \ar[d] & [\textrm{I}_{c}] & [\textrm{II}_{r}]  \ar[d]  \ar[dr]  & & [\textrm{II}_{rc}] \ar[d]  \ar[dr] & & [\textrm{III}_{r}]  \ar[d] \ar[dr] &    &
[\textrm{D}_{r}]  \ar[d]  \ar[dr]  &  &  & [\textrm{D}_{c}] & [\textrm{N}_{r}]  \ar[d]  \ar[dr] & & [-] \ar[ddddd] \ar[dddddr] \\ 
\textrm{1 cns:}  & [\textrm{I}_{r}]^{e} \ar[d]  & [\textrm{I}_{rc}]^{e} \ar[d] &  & [\textrm{II}_{r}]^{e} \ar[d] \ar[dr] & [\textrm{II}_{r}]^{n} \ar[d]  & [\textrm{II}_{rc}]^{e} & [\textrm{II}_{rc}]^{n} & [\textrm{III}_{r}]^{e} \ar[d]  \ar[dr] & [\textrm{III}_{r}]^{n} \ar[d] & 
[\textrm{D}_{r}]^{e} \ar[d]  \ar[dr] & [\textrm{D}_{r}]^{n} \ar[d]  \ar[dr] &  & & [\textrm{N}_{r}]^{e}  & [\textrm{N}_{r}]^{n}  &  \\
\textrm{2 cns:}  & [\textrm{I}_{r}]^{ee}  \ar[d] & [\textrm{I}_{rc}]^{ee}  & &  [\textrm{II}_{r}]^{ee} \ar[d] \ar[dr] & [\textrm{II}_{r}]^{en} \ar[d] & & &  [\textrm{III}_{r}]^{ee}  & [\textrm{III}_{r}]^{en}  & 
[\textrm{D}_{r}]^{ee}  & [\textrm{D}_{r}]^{en}  & [\textrm{D}_{r}]^{nn} &  &  &   &  \\
\textrm{3 cns:}  & [\textrm{I}_{r}]^{eee} \ar[d]  & & &  [\textrm{II}_{r}]^{eee}  & [\textrm{II}_{r}]^{een} & & &  &  & &   &  &  &  & &  \\
\textrm{4 cns:}  & [\textrm{I}_{r}]^{eeee} & &  & & &  & &  &  & &  & &  &  &   &  \\
\textrm{$\infty$ cns:}  &    & & &   & &  &  & &  & & &  &  &  & &  [-]^{e}  & [-]^{n} \\
}
}
\end{displaymath} 
\caption{Types of $C_{ABCD}$ in $\mathbf{UR}$ equipped with different numbers of congruences of the SD null strings.}
\label{Degeneration_scheme_of_real_case}
\end{Scheme}

The possible applications of our results concern not only $\mathbf{CR}$ and $\mathbf{UR}$ but also $\mathbf{HR}$. In \cite{Nurowski_Trautman} there is an example of Lorentzian manifold equipped with three distinct sngc (which are the intersection of 3 SD and 3 ASD congruences of null strings). The general form of the metric of the space equipped with three distinct sngc is still unknown. The richest possible structure appears as intersections of 4 SD and 4 ASD congruences of null strings. In $\mathbf{HR}$ it gives 4 distinct sngc. In \cite{Trautman} Trautman asks: "In dimension 4, are there not conformally flat Lorentz manifolds that have 4 distinct shear-free null geodesic congruences?". Such spaces (if exist) cannot be conformal to Einstein spaces (because of the Goldberg - Sachs theorem). Our considerations (Theorem \ref{twierdzenie_o_spinorze_generujacym_wstege_i_spinorze_Penrosa}) give more properties of such spaces: they must be of Petrov-Penrose type [I]. The explicit metric as well as the possible types of tensor of matter admitted by such spaces are unknown. The problem of Lorentzian manifolds with 4 distinct sngc is very interesting and we are going to analyze it in details in near future. 

Quite surprisingly, the space with two distinct expanding cns seems to admit all algebraic types of the traceless Ricci tensor. We believe, that in fact there are some restrictions which reduce the number of possible types in such a case. Similarly, in spaces equipped with three or four expanding cns those restrictions should be even stronger. Frankly, this problem is more advanced and it involves deeper analysis.

Finally, expansion of the cns is the most transparent property of such structure and has clear geometrical interpretation. It seems, that the Sommers vector carries some information about cns. Natural question arises if the properties the Sommers vector define a new subclassification of cns? 
\newline
\newline
\textbf{Acknowledgments}
\newline
Some points of the present paper were presented in July 2016 in Brno at the 13th conference \textsl{Differential Geometry and its Applications} and in September 2016 in Krakow at \textsl{The 3rd Conference of the Polish Society on Relativity}. The author is indebted to Maciej Przanowski for his interest in this work and for help in many crucial matters.



\begin{thebibliography}{99}


\bibitem{Aleeksiewicz} Alekseevsky D.V, Medori C, and Tomassini A., \textsl{Homogeneous para-Kähler Einstein manifolds}, Russian Mathematical Surveys, Volume 64, Number 1 (2009)



\bibitem{Boyer_Finley_Plebanski} Boyer C.P., Finley III J.D. and Plebański J.F, \textsl{Complex generel relativity, $\mathcal{H}$ and $\mathcal{HH}$ spaces - a survey of one approach}, in General Relativity and Gravitation. Einstein Memorial Volume, ed. A. Held (Plenum, New York, 1980) vol. 2. pp. 241-281 (1980)



\bibitem{Chaichi} Chaichi M., Garc\'\i a - R\'\i o E. and Matsushita Y., \textsl{Curvature properties of four - dimensional Walker metrics}, Class. Quantum Grav., \textbf{22}, 559 (2005)

\bibitem{Chudecki_para_Hermite_para_Kahler} Chudecki A., \textsl{On some examples of para-Hermite and para-K\"{a}hler Einstein spaces with $\Lambda \ne 0$}, Journal of Geometry and Physics 112, 175–196



\bibitem{Chudecki_Classification} Chudecki A., \textsl{Classification of the traceless Ricci tensor in 4-dimensional pseudo-Riemannian spaces of neutral signature}, Acta Physica Polonica B, Vol. 48, No. 1, 53-74



\bibitem{Ch_Walkery}  Chudecki A. and Przanowski M., \textsl{From hyperheavenly spaces to Walker and Osserman spaces: I}, Class. Quantum Grav., \textbf{25} 145010 (2008)


\bibitem{Flaherty} Flaherty E. J., \textsl{Hermitian and K\"{a}hlerian
Geometry in Relativity}, Lecture Notes in Physics, \textsl{Springer-Verlag, Berlin} (1976)


\bibitem{Goldberg_Sachs} Goldberg J.N. and Sachs R.K., \textsl{A theorem on Petrov types}, Acta Physica Polonica (supplement), \textbf{22}, 13 (1962)

\bibitem{Law_Matsushita} Law P.R. and Matsushita Y., \textsl{Real AlphaBeta-geometries and Walker geometry}, Journal of Geometry and Physics, \textbf{65} 35-44 (2013)




\bibitem{Nurowski_An} Nurowski P. and An D., \textsl{Twistor Space for Rolling Bodies}, Communications in Mathematical Physics, 326, 2, 393-414 (2013)

\bibitem{Nurowski_Bor_Lamoneda} Nurowski P., Bor G. and Lamoneda L.H., \textsl{The dancing metric, G2-symmetry and projective rolling}, arXiv:1506.00104

\bibitem{Nurowski_Chabert} Nurowski P. and Taghavi-Chabert A., \textsl{A Goldberg-Sachs theorem in dimension three}, Class. Quantum Grav. \textbf{32}, 115009 (2015)


\bibitem{Nurowski_Trautman} Nurowski P. and Trautman A., \textsl{Robinson manifolds as the Lorentzian analogs of Hermite manifolds}, Differential Geometry and its Applications, \textbf{17}, 175-195 (2002) 


\bibitem{Penrose} Penrose R., \textsl{A spinor approach to general relativity}, Ann. Phys. \textbf{10}, 171 (1960).

\bibitem{Pirani} Pirani F.A.E., \textsl{Introduction to Gravitational Radiation Theory}, in Lectures in General Relativity Theory, Brandeis Summer Institute, Prentice - Hall, Inc. (1964)


\bibitem{Plebanski_Spinors} Plebański J. F., \textsl{Spinors, tetrads and forms}, unpublished monograph of Centro de Investigacion y Estudios Avanzados del IPN, Mexico 14 (1974)


\bibitem{Plebanski_klasyfikacja_matter} Plebański J.F., \textsl{The algebraic Structure of the tensor of matter}, Acta Physica Polonica, vol. 26, 963-1020 (1964)

\bibitem{Pleban_formalism_1} Plebański J. F., \textsl{The spinorial and helicity formalisms of Riemannian structures in complex and real four dimensions}, preprint of Centro de Investigaci\'on y de Estudios Avanzados del IPN, M\'exico D.F., M\'exico (1980) 






\bibitem{Plebanski_surf} Plebański J. F. and Hacyan S., \textsl{Null geodesic surfaces and Goldberg - Sachs theorem in complex Riemannian spaces}, J. Math. Phys., \textbf{16}, 2403 (1975 )


\bibitem{Pleban_formalism_2} Plebański J. F. and Przanowski M., \textsl{Null tetrad, spinor and helicity formalisms for all 4-dimensional Riemannian manifolds. I. Null tetrads and spinors}, Acta Phys. Polon., \textbf{B19} 805 (1988) 


\bibitem{Rozga} Plebański J. F. and Rózga K. \textsl{The optics of null strings}, J. Math. Phys., \textbf{25}, 1930 (1984)


\bibitem{Przanowski} Przanowski M., \textsl{Remarks concerning the geometry of null strings}, Acta Phys. Polon., \textbf{B11} 945 (1979)


\bibitem{Przanowski_Formanski_Chudecki} Przanowski M., Formański S. and Chudecki A., \textsl{Notes on para-Hermite-Einstein spacetimes}, International Journal of Geometric Methods in Modern Physics, Vol.9, No.1, 1250008 (2012)


\bibitem{Przanowski_classification} Przanowski M. and Plebański J.F., \textsl{Generalized Goldberg-Sachs theorems in complex and real space-times. I.}, Acta Physica Polonica, Vol. B10 (1979)

\bibitem{Przanowski_generalized_Goldberg} Przanowski M. and Plebański J.F., \textsl{Generalized Goldberg-Sachs theorems in complex and real space-times II}, Acta Phys. Polon. B \textbf{10}, 573 (1979)





\bibitem{Rod_Hill_Nurowski} Rod Gover A., Denson Hill C. and Nurowski P., \textsl{Sharp version of the Goldberg-Sachs theorem}, Annali di Matematica,  Vol. 190, Issue 2, pp 295–340 (2011)


\bibitem{Rozga_Robinson} Robinson I. and Rózga K. \textsl{Congruences of null strings in complex space-times and some Cauchy-Kovalevski-like problems}, J. Math. Phys., \textbf{25}, 1941 (1984)


\bibitem{Stephani} Stephani H., Kramer D., MacCallum M. A. H., Hoenselaers C. and Herlt E., \textsl{Exact Solutions to Einstein's Field Equations. Second Edition}, \textsl{Cambridge University Press, Cambridge} (2003)


\bibitem{Szereszewski_Tafel} Szereszewski A. and Tafel J., \textsl{Solutions of the Rarita–Schwinger equation in Einstein spaces}, Physics Letters A 297, 359–362 (2002)




\bibitem{Taghavi_Chabert_2} Taghavi-Chabert A., \textsl{The complex Goldberg-Sachs theorem in higher dimensions}, J. Geom. Phys. 62, no. 5, 981-1012 (2012)


\bibitem{Trautman} Trautman A., \textsl{Robinson manifolds and Cauchy-Riemann spaces}, Classical and Quantum Gravity, \textbf{19}, R1 (2002)

\bibitem{Walker} Walker A.G., \textsl{Canonical form for a Riemannian space with a parallel field of null planes}, Quart. J. Math. Oxford (2), \textbf{1}, 69 (1950)








\end{thebibliography}
\end{document}